\newtheorem{theorem}{Theorem}
\newtheorem*{theorem*}{Theorem}
\newtheorem{proposition}[theorem]{Proposition}
\newtheorem{corollary}[theorem]{Corollary}
\newtheorem{lemma}[theorem]{Lemma}
\newtheorem*{lemma*}{Lemma}
\DeclarePairedDelimiter\ceil{\lceil}{\rceil}
\newcommand{\cov}{\overset{\mathrm{cov}}{\succ}}
\newcommand{\cova}{\cov_{\mathrm{a}}}
\newcommand{\ii}{\mathrm{i}}
\newcommand{\dd}{\mathrm{d}}
\begin{document}

\title{Smooth Metric Adjusted Skew Information Rates}
\author{Koji Yamaguchi}
\affiliation{Department of Applied Mathematics, University of Waterloo, Waterloo, Ontario, N2L
3G1, Canada}
\orcid{0000-0002-9723-6145}
\email{koji.yamaguchi@uwaterloo.ca}
\author{Hiroyasu Tajima}
\affiliation{Department of Communication Engineering and Informatics,
University of Electro-Communications, 1-5-1 Chofugaoka, Chofu, Tokyo, 182-8585, Japan}
\affiliation{JST, PRESTO, 4-1-8 Honcho, Kawaguchi, Saitama, 332-0012, Japan}

\begin{abstract}
Metric adjusted skew information, induced from quantum Fisher information, is a well-known family of resource measures in the resource theory of asymmetry. 
However, its asymptotic rates are not valid asymmetry monotone since it has an asymptotic discontinuity. We here introduce a new class of asymmetry measures with the smoothing technique, which we term smooth metric adjusted skew information. We prove that its asymptotic sup- and inf-rates are valid asymptotic measures in the resource theory of asymmetry. Furthermore, it is proven that the smooth metric adjusted skew information rates provide a lower bound for the coherence cost and an upper bound for the distillable coherence. 
\end{abstract}

\maketitle

\section{Introduction}
Symmetry and conservation laws are basic yet profound concepts in modern physics.
As first pointed out by Wigner \cite{wigner_messung_1952} and further examined by Araki and Yanase \cite{araki_measurement_1960}, conservation laws place limitations on the precision of the measurements. Based on this observation, Wigner and Yanase \cite{wigner_information_1963} introduced a measure of information content in the presence of a conservation law, called the Wigner-Yanase skew information. It quantifies how a state is ``askew" (i.e., non-diagonal, or coherent) to the eigenbasis of the conserved quantity. This quantity was later shown to be closely related to quantum information geometry \cite{gibilisco_characterisation_2001,gibilisco_wigneryanase_2003}. Generalizing this connection, metric adjusted skew information is introduced from a family of quantum Fisher information metrics as quantifiers of information content relative to a conserved quantity \cite{hansen_metric_2008}. 

A deeper understanding of the above relation between symmetries and skew information has been gained through the recent development of quantum resource theories. 
Quantum resource theories \cite{chitambar_quantum_2019} provide a powerful framework for studying the connection between a resource and the physical restriction posed by our inability to prepare a state and perform an operation. Due to its wide applicability, various kinds of resources have been investigated, such as entanglement \cite{horodecki_quantum_2009}, coherence \cite{aberg_quantifying_2006,baumgratz_quantifying_2014,streltsov_colloquium_2017}, athermality \cite{brandao_resource_2013,brandao_second_2015} and asymmetry \cite{gour_resource_2008}. The resource theory of asymmetry is one of the most actively studied quantum resource theories, where the dynamics and states are restricted by symmetries of the system.

In the resource theory of asymmetry, a class of coherence called asymmetry is considered to be a resource to implement operations that break the symmetry. Asymmetry captures consequences of symmetry that cannot be captured by the Noether theorem \cite{marvian_extending_2014} and has various applications. For example, time-translation asymmetry is mandatory for creating accurate clocks \cite{giovannetti_quantum-enhanced_2001,giovannetti_quantum_2006,giacomini_quantum_2019,schnabel_quantum_2010,marvian_coherence_2020,woods_autonomous_2019} and accelerating quantum operations \cite{marvian_quantum_2016}, and is known to be a resource independent of entropy in quantum thermodynamics \cite{lostaglio_description_2015}. Furthermore, asymmetry is shown to be an essential resource in various problems under conservation laws such as quantum measurements \cite{wigner_messung_1952,araki_measurement_1960,yanase_optimal_1961,ozawa_conservation_2002,korzekwa_resource_2003,tajima_coherence-variance_2019,kuramochi_wigner-araki-yanase_2022}, gate implementation in quantum computing \cite{ozawa_conservative_2002,tajima_uncertainty_2018,tajima_coherence_2020,tajima_universal_2021,tajima_universal_2022}, quantum error correction \cite{kubica_using_2021,zhou_new_2021,yang_optimal_2022,tajima_universal_2021,liu_quantum_2022,tajima_universal_2022}, and the Hayden–Preskill model for evaporating black holes \cite{tajima_universal_2021,tajima_universal_2022}.

As a resource under conservation laws, asymmetry is closely related to metric adjusted skew information. Indeed, metric adjusted skew information plays a significant role in both one-shot and asymptotic settings. In the one-shot setting, a family of metric adjusted skew information is known as valid resource measures \cite{zhang_detecting_2017,takagi_skew_2019} for $U(1)$ symmetry. It is lifted to a family of skew information matrices \cite{kudo_fisher_2022} for a general Lie group $G$, which are valid asymmetry measures as well as $G$-asymmetry \cite{vaccaro_tradeoff_2008} and the relative entropy of $G$-frameness \cite{gour_measuring_2009}. The convertibility conditions in the one-shot regime \cite{gour_resource_2008,marvian_mashhad_symmetry_2012,gour_quantum_2018,yamaguchi_beyond_2022,gour_role_2022} are significantly simplified in the asymptotic regime, and a thermodynamic structure based on skew information appears for i.i.d. pure states \cite{gour_resource_2008,marvian_operational_2022}. Furthermore, the conversion theory in the non-i.i.d. regime \cite{yamaguchi_beyond_2022} has been established by extending the quantum Fisher information, which is an example in the family of metric adjusted skew information.

Despite the above progress in the resource theory of asymmetry, little is known about asymmetry measures in the asymptotic regime. In particular, it is known that the asymptotic rate of metric adjusted skew information is \textit{not} a valid asymptotic asymmetry measure due to a discontinuity in the asymptotic regime \cite{gour_measuring_2009,marvian_operational_2022}. Also, the asymptotic rate of the relative entropy of $G$-frameness becomes trivial for any i.i.d. states since the regularized relative entropy of $G$-frameness vanishes in the asymptotic limit \cite{gour_measuring_2009}. These results show that useful asymptotic asymmetry measures cannot be obtained from a one-shot asymmetry measure just by calculating its asymptotic rate.

In this paper, we introduce a family of valid asymptotic asymmetry measures, which we call the smooth metric adjusted skew information rates. For this purpose, we first introduce the $\epsilon$-smooth metric adjusted skew information parameterized by the smoothness parameter $\epsilon\in(0,1]$ with the smoothing technique \cite{renner_smooth_2004,renner_security_2005}. Then, we define the smooth metric adjusted skew information rates as the sup- and inf-rates of the $\epsilon$-smooth metric adjusted skew information in the limit of $\epsilon\to 0$. We show that they are valid asymptotic asymmetry measures. For an i.i.d. sequence of a pure state, we further show that they are equal to the metric adjusted skew information of the state. Combining these results, we relate the smooth metric adjusted skew information rates to the coherence cost and the distillable coherence, which are central operational quantities in the asymptotic conversion theory. Concretely, by using an argument of Lieb-Yngvason's non-equilibrium thermodynamics \cite{lieb_entropy_2013}, we prove that the smooth metric adjusted skew information rates provide a lower bound of the coherence cost and an upper bound of the distillable coherence.

This paper is organized as follows: In Sec.~\ref{sec:rta_review}, we first review the basics of the resource theory of asymmetry. We also briefly summarize the properties of metric adjusted skew information, including its asymptotic discontinuity. In Sec.~\ref{sec:s_masi}, we introduce a family of the $\epsilon$-smooth metric adjusted skew information and its asymptotic rates. We state our main theorems on the properties of the smooth metric adjusted skew information rates. Theorem~\ref{thm:smooth_masi_rates_valid_measures} shows that the smooth metric adjusted skew information rates are valid asymptotic asymmetry measures. In Theorem~\ref{thm:smooth_masi_rates_iid}, the smooth metric adjusted skew information rates for i.i.d. states are explicitly calculated. We also present a general asymptotic behavior of metric adjusted skew information for states near i.i.d. pure states, which are used to prove Theorem~\ref{thm:smooth_masi_rates_iid}. The details of the proofs are postponed to Appendix. In Sec.~\ref{sec:cost_dist}, we prove inequalities that relate the smooth metric adjusted skew information rates to the coherence cost and the distillable coherence by using Theorems~\ref{thm:smooth_masi_rates_valid_measures} and \ref{thm:smooth_masi_rates_iid}, based on Lieb-Yngvason's non-equilibrium thermodynamics.

\section{Resource theory of asymmetry}\label{sec:rta_review}

\subsection{Definition}
In this subsection, we review the definition of the resource theory of asymmetry. Fundamental elements of any resource theory are free states and free operations that are prepared and performed freely. In the resource theory of asymmetry, they are defined with respect to a symmetry group $G$. We say that a state $\rho$ is symmetric if and only if it is invariant under a group action, i.e., $\rho=U_g\rho U_g^\dag$ for any $g\in G$, where $U_g$ denotes the unitary representation of $g$. A symmetric state can be prepared without access to the reference frame related to the group $G$ \cite{gour_resource_2008} and is considered a free state in the resource theory of asymmetry. Any state that breaks the symmetry is called asymmetric and is regarded as a resource. 

A free operation in the resource theory of asymmetry is the so-called covariant operation $\mathcal{E}$ satisfying $\mathcal{E}(U_g^{\mathrm{in}}(\rho)U_g^{\mathrm{in}\dag})=U_g^{\mathrm{out}}\mathcal{E}(\rho)U_g^{\mathrm{out}\dag}$ for all states $\rho$ and $g\in G$, where $U_g^{\mathrm{in}}$ and $U_g^{\mathrm{out}}$ are unitary representations of $g$ in the input and output systems, respectively. A covariant operation describes a process that can be implemented without access to the reference frame related to the group $G$. By using the covariant Stinespring dilation theorem \cite{keyl_optimal_1999,marvian_mashhad_symmetry_2012}, it can also be interpreted as the dynamics of a system that couples to an ancillary system initially in a symmetric state under a unitary evolution with a conservation law of additive observables associated with $G$.

Although the resource theory of asymmetry is a general framework for investigating the consequences of symmetries described by any group $G$, we here focus on $U(1)$ and $(\mathbb{R},+)$ groups where a unitary representation is given by $U_t=e^{-\ii H t}$ with an observable $H$ on the system. This corresponds to the case where a single additive quantity is conserved. Such an additive quantity can vary depending on the physical situation we are interested in. To simplify the terminologies, we always call it the Hamiltonian of a system in this paper. In this case, a quantum channel $\mathcal{E}$ is covariant if and only if it commutes with the time translation, that is,
\begin{align}
    \mathcal{E}(e^{-\ii H t}\rho e^{\ii Ht})=e^{-\ii H't}\mathcal{E}(\rho)e^{\ii H't},\quad \forall \rho,\,\forall t\in\mathbb{R},
\end{align}
where $H$ and $H'$ are the Hamiltonians of the input and output systems of $\mathcal{E}$. Furthermore, a symmetric state can be understood as a state diagonalized by the energy eigenbasis since $\rho=e^{-\ii Ht}\rho e^{\ii Ht}$ implies $[\rho,H]=0$. Since an asymmetric state $\rho$ satisfies $[\rho,H]\neq 0$, the resource theory of asymmetry with time-translation symmetry is a branch of resource theories investigating the properties of energetic coherence.

For later convenience, we here introduce notations for convertibility in the resource theory of asymmetry. 
The most basic setup in conversion theory is one-shot conversion without error, which has been analyzed, e.g., in \cite{gour_resource_2008,marvian_mashhad_symmetry_2012,yamaguchi_beyond_2022,gour_role_2022}. 
Suppose that for a given state $\rho$ of a system with Hamiltonian $H$ and a given state $\sigma$ of a system with Hamiltonian $H'$, there exists a covariant operation $\mathcal{E}$ such that $\sigma=\mathcal{E}(\rho)$. In this case, we say that $\rho$ is convertible to $\sigma$ and denote
\begin{align}
    (\rho,H)\cov (\sigma,H').
\end{align}
It should be noted that convertibility depends not only on the states $\rho$ and $\sigma$, but also on the Hamiltonians $H$ and $H'$ since the covariance of channels is defined with respect to the time translations generated by them. The binary relation $\cov$ is a preorder since the identity operation is covariant and any product of covariant operations is covariant. 

Since the exact conversion is too restrictive from a practical point of view, conversion with an error is often analyzed. In this paper, we focus on the setup in which the error vanishes in the asymptotic limit. Consider sequences of states $\widehat{\rho}=\{\rho_m\}_m$ and $\widehat{\sigma}=\{\sigma_m\}_m$ of systems with Hamiltonians $\widehat{H}=\{H_m\}_m$ and $\widehat{H'}=\{H_m'\}_m$, respectively. We say that $\widehat{\rho}$ is asymptotically convertible to $\widehat{\sigma}$ by covariant operations if and only if for any $\epsilon\in(0,1]$, there exists a sequence of covariant operations $\{\mathcal{E}_m\}_m$ such that $\limsup_{m\to\infty}D(\mathcal{E}_m(\rho_m),\sigma_m)\leq \epsilon$, where $D$ denotes the trace distance defined by $D(\rho,\sigma)\coloneqq \frac{1}{2}\|\rho-\sigma\|_1$. Here, the covariance of $\mathcal{E}_m$ is defined with respect to $H_m$ and $H_m'$ for each $m$. In this case, we denote 
\begin{align}
    (\widehat{\rho},\widehat{H})\cova (\widehat{\sigma},\widehat{H'}).
\end{align} 
The subscript in $\cova$ indicates that the binary relation is defined for the asymptotic conversion. The asymptotic conversion theory has been analyzed in \cite{gour_resource_2008,marvian_coherence_2020,marvian_operational_2022,yamaguchi_beyond_2022}. We will briefly review their results in Sec.~\ref{sec:cost_dist}, where we relate the smooth metric adjusted skew information rates to the coherence cost and the distillable coherence. Similarly to the one-shot case, the binary relation $\cova$ is a preorder.

\subsection{Metric adjusted skew informations as asymmetry measures}
Resource theories have the advantage of providing a concrete way to quantify a resource. In general, a resource measure is defined with a conversion relation. 
We say that $R(\rho,H)$ is an asymmetry measure (in the one-shot setting) if and only if the following two conditions are satisfied:
\begin{enumerate}[1)]
    \item Monotonicity: If $(\rho,H)\cov (\sigma,H')$, then $R(\rho,H)\geq R(\sigma,H')$.
    \item If $\rho$ is symmetric, then $R(\rho,H)=0$. 
\end{enumerate}
If a measure $R$ further satisfies the converse of the second condition, i.e., if $R(\rho,H)=0$ implies that $\rho$ is symmetric, we say $R$ is faithful. We do not impose faithfulness as one of the minimal requirements for an asymmetry measure, as well as entanglement measures in entanglement theory.

In the one-shot regime, there are several known asymmetry measures \cite{vaccaro_tradeoff_2008,gour_measuring_2009,gour_resource_2008, zhang_detecting_2017,takagi_skew_2019}. Our main focus in this paper is the family of metric adjusted skew information \cite{hansen_metric_2008}.
They are first introduced as quantifiers of non-commutativity of a state and an observable by extending the Wigner-Yanase skew information \cite{wigner_information_1963}, which are later shown to be asymmetry measures \cite{zhang_detecting_2017,takagi_skew_2019}.

The metric adjusted skew information is closely related to the Riemannian geometry of the state space. In classical information geometry, there is a unique Riemannian metric that monotonically contracts under information processing, which is called the Fisher information metric. The study of monotone metrics in quantum information theory is initiated by Morozova and Chentsov in \cite{morozova_markov_1989}. Its classification is completed by Petz in \cite{petz_monotone_1996}, showing a one-to-one correspondence between a monotone metric and an operator monotone function. 
We say that a function $f$ is a standard monotone function if and only if it satisfies the following three conditions:
\begin{enumerate}[i)]
    \item $0\leq A\leq B$ $\implies $ $f(A)\leq f(B)$.
    \item $f(x)=xf(1/x)$
    \item $f(1)=1$
\end{enumerate}
It is known that 
\begin{align}
    f_{\mathrm{RLD}}(x)\leq f(x)\leq f_{\mathrm{SLD}}(x),\quad x>0
\end{align}
for any standard monotone functions $f$ \cite{kubo_means_1980,petz_quantum_2008,gibilisco_refinement_2011}, where $f_{\mathrm{RLD}}\coloneqq 2x/(x+1)$ and $f_{\mathrm{SLD}}\coloneqq (x+1)/2$. Here, RLD and SLD are abbreviations for right logarithmic derivative and symmetric logarithmic derivative. 
For a family of states $\rho_t$ parameterized by a single real number $t\in\mathbb{R}$, the family of quantum Fisher information is defined by the norm of $\partial_t\rho_t$ with respect to the monotone metric \cite{petz_monotone_1996,petz_introduction_2011}, given by
\begin{align}
    J^f(\rho_t)\coloneqq \mathrm{Tr}\left(\frac{\partial\rho_t}{\partial t}c_f\left(L_{\rho_t},R_{\rho_t}\right)\left(\frac{\partial\rho_t}{\partial t}\right)\right),\label{eq:fisher_info_definition}
\end{align}
where $L_\rho$ and $R_\rho$ denote the left and right multiplication operators by $\rho$, i.e., $L_\rho(O)=\rho O$ and $R_\rho(O)=O\rho$ for an operator $O$. The function $c_f(x,y)$ is called the Morozova–Chentsov function associated with a standard monotone function $f$, which is defined by
\begin{align}
    c_f(x,y)=\frac{1}{yf(x/y)} ,\quad x,y>0.\label{eq:mc_function_definition}
\end{align}

A standard monotone function satisfying $f(0)\neq 0$ is called regular. Hansen \cite{hansen_metric_2008} defined the metric adjusted skew information for a regular standard monotone $f$ as
\begin{align}
    I^f(\rho,H)\coloneqq \frac{f(0)}{2}J^f(\rho_t)\biggl|_{t=0}\label{eq:def_skew_info}
\end{align}
where $\rho_t$ denotes a unitary model defined by $\rho_t\coloneqq\{e^{-\ii Ht}\rho e^{\ii Ht}\}_{t\in\mathbb{R}}$.
The prefactor $f(0)/2$ is chosen so that 
\begin{align}
    I^f(\psi,H)=\mathrm{Var}(\psi,H)
\end{align}
holds for any pure state $\psi$, where $\mathrm{Var}(\rho,H)\coloneqq \mathrm{Tr}(\rho H^2)-(\mathrm{Tr}(\rho H))^2$ denotes the variance. For a generic state $\rho$, it holds $I^f(\rho,X)\leq \mathrm{Var}(\rho,X)$. 
By using the eigenvalue decomposition $\rho=\sum_{i}\lambda_i\ket{i}\bra{i}$, the metric adjusted skew informations are written as
\begin{align}
    I^f(\rho,H)=\frac{f(0)}{2}\sum_{i,j}\frac{(\lambda_i-\lambda_j)^2}{\lambda_jf(\lambda_i/\lambda_j)}|\braket{i|H|j}|^2.\label{eq:skew_info_def}
\end{align}

An example of the metric adjusted skew information is the Wigner-Yanase-Dyson skew information, which is known as a one-parameter extension of the Wigner-Yanase skew information. The corresponding operator monotone function is given by $f_{\mathrm{WYD},p}(x)=p(1-p)(x-1)^2/((x^p-1)(x^{1-p}-1))$ \cite{hansen_metric_2008}. The Wigner-Yanase skew information is obtained as a special case for $p=1/2$. Another important example is skew information for $f_{\mathrm{SLD}}$, which corresponds to the SLD Fisher information. It is known that the SLD skew information is equal to the convex roof of variance \cite{toth_extremal_2013,yu_quantum_2013}: $I^{f_{\mathrm{SLD}}}(\rho,H)=\min_{\{p_i,\phi_i\}}\sum_ip_i\mathrm{Var}(\phi_i,H)$, where $\{p_i,\phi_i\}$ runs over the set of all probability distributions $\{p_i\}_i$ and pure states $\{\phi_i\}_i$ satisfying $\rho=\sum_ip_i \phi_i$. The SLD skew information is the largest among the family of metric adjusted skew information.  Precisely, it holds 
\begin{align}
    \frac{1}{2f(0)}I^f(\rho,H)\geq I^{f_{\mathrm{SLD}}}(\rho,H)\geq I^f(\rho,H)\label{eq:inequalities_skew_infos},
\end{align}
where $f$ is an arbitrary regular standard monotone function \cite{gibilisco_inequalities_2009}. 

From the viewpoint of the resource theory of asymmetry, critical properties of metric adjusted skew informations are the following two:
\begin{enumerate}[1)]
    \item Monotonicty \cite{zhang_detecting_2017}: If $(\rho,H)\cov(\sigma,H')$, then $I^f(\rho,H)\geq I^f(\sigma,H')$.
    \item If $\rho$ is symmetric, then $I^f(\rho,H)=0$.
\end{enumerate}
In other words, metric adjusted skew informations are valid asymmetry measures. Furthermore, since $J^f(\rho_t)=0$ only if $\partial_t \rho_t=0$, metric adjusted skew informations are faithful asymmetry measures.

Another property which we shall use later is the convexity of the metric adjusted skew informations: For any set of density operator $\{\rho_k\}$ and probability distribution $\{p_k\}_k$ such that $p_k\geq 0$ and $\sum_k p_k=1$, it holds
\begin{align}
    \sum_k p_k I^f(\rho_k,H)\geq I^f\left(\sum_k p_k \rho_k,H\right).
\end{align}
Such a convexity is one of the requirements that Wigner and Yanase \cite{wigner_information_1963} imposed on quantifiers of information content under conservation laws. The convexity of the Wigner-Yanase skew information is proved in \cite{wigner_information_1963}. The convexity of the Wigner-Yanase-Dyson skew information was called the Wigner-Yanase-Dyson conjecture, which was later proved by Lieb \cite{lieb_convex_1973}. The convexity for all metric adjusted skew informations is proven by Hansen \cite{hansen_metric_2008}.

\subsection{Asymptotic discontinuity of metric adjusted skew informations}
The metric adjusted skew informations are valid asymmetry measures that monotonically decrease under covariant operations. However, their asymptotic rates
\begin{align}
    \limsup_{m\to\infty}\frac{1}{m}I^f(\rho_m,H_m),\quad \liminf_{m\to\infty}\frac{1}{m}I^f(\rho_m,H_m)\label{eq:masi_rates}
\end{align}
are \textit{not} asymptotic asymmetry measures. Here, we say that a quantity $R(\widehat{\rho},\widehat{H})$ defined for sequences of states $\widehat{\rho}$ and Hamiltonians $\widehat{H}$ is an asymptotic asymmetry measure if and only if it satisfies the following two conditions:
\begin{enumerate}[1)]
    \item Monotonicity: If $(\widehat{\rho},\widehat{H})\cova (\widehat{\sigma},\widehat{H'})$, then $R(\widehat{\rho},\widehat{H})\geq R(\widehat{\sigma},\widehat{H'})$.
    \item If $\widehat{\rho}$ is a sequence of symmetric states, then $R(\widehat{\rho},\widehat{H})=0$.
\end{enumerate}
We usually do not impose faithfulness for asymptotic resource measures as a minimal requirement. This is because an operationally important asymmetry measure called the distillable coherence is not faithful. A similar argument can also be found in entanglement theory \cite{horodecki_limits_2000}.

Intuitively, we can understand the reason why the metric adjusted skew information rates in Eq.~\eqref{eq:masi_rates} are not asymptotic asymmetry measures as follows: From Eq.~\eqref{eq:skew_info_def}, we can estimate the maximal change in metric adjusted skew informations as $\sim \|H_m\|^2\times \epsilon$ when the state changes on the order of $\epsilon$ in the trace distance. This implies that even when $\widehat{\rho}$ and $\widehat{\sigma}$ are interconvertible, their metric adjusted skew information rates can take different values, violating condition 1) above. We remark that this was one of the non-trivial issues in establishing the conversion theory in the i.i.d. regime \cite{marvian_operational_2022}. 

Indeed, such a discontinuity can be seen by examining the variance. A quantity $A$ is called asymptotically continuous \cite{donald_uniqueness_2002,plenio_introduction_2007,gour_measuring_2009} if
\begin{align}
    \lim_{m\to\infty}\frac{A(\rho_m)-A(\sigma_m)}{1+\log(\mathrm{dim}(\mathcal{H}_m))}=0\label{eq:asy_continuous}
\end{align}
holds for any states $\rho_m$ and $\sigma_m$ of a Hilbert space $\mathcal{H}_m$ such that $\lim_{m\to\infty}D(\rho_m,\sigma_m)=0$. In a typical setup, including i.i.d. setup, $1+\log(\mathrm{dim}(\mathcal{H}_m))=O(m)$ for $m\to\infty$. Therefore, Eq.~\eqref{eq:asy_continuous} implies $\lim_{m\to\infty}\frac{1}{m}A(\rho_m)=\lim_{m\to\infty}\frac{1}{m}A(\sigma_m)$ if $\lim_{m\to\infty}D(\rho_m,\sigma_m)=0$. 
In \cite{gour_measuring_2009}, it is pointed out that the variance is not asymptotically continuous. Since metric adjusted skew informations are equal to the variance for pure states, this result proves that the metric adjusted skew information rates cannot be asymptotic asymmetry measures.

\section{Smooth metric adjusted skew information rates}\label{sec:s_masi}
So far, we have seen that the asymptotic rates of the metric adjusted skew informations are not asymptotic asymmetry measures. Intuitively, this is because metric adjusted skew informations change too drastically with a small perturbation in the state space and hence they are not good quantifiers in the asymptotic conversion theory with an error. 

To find an asymptotic asymmetry measure related to the metric adjusted skew informations, we here apply the smoothing technique to metric adjusted skew informations. The smoothing technique \cite{renner_smooth_2004,renner_security_2005} is commonly used to investigate information-theoretic tasks with an error. It is closely related to the information-spectrum method \cite{han_information-spectrum_2003}, a powerful and universal tool to analyze asymptotic problems in information theory. The smoothing technique is used in a recent study \cite{yamaguchi_beyond_2022} on the asymptotic conversion theory in the resource theory of asymmetry in the non-i.i.d. regime. In \cite{yamaguchi_beyond_2022}, the smoothing technique is applied for the max- and min-quantum Fisher information to construct an information-spectrum approach for quantum Fisher information. However, it has not been applied to metric adjusted skew informations. 

To begin with, let us introduce a family of the $\epsilon$-smooth metric adjusted skew informations.
For a smoothing parameter $\epsilon\in(0,1]$, we define
\begin{align}
    I_\epsilon^{f}(\rho,H)\coloneqq \inf_{\sigma\in B^\epsilon(\rho)}I^f(\sigma,H),
\end{align}
where $B^{\epsilon}(\rho)$ is the $\epsilon$-ball in the state space defined by $B^{\epsilon}(\rho)\coloneqq \{\sigma: \text{ states}\mid D(\rho,\sigma)\leq \epsilon\}$.
In Appendix~\ref{app:properties_smooth_masi}, we show that $I_\epsilon^{f}(\rho,H)$ monotonically decreases through a covariant channel.
In addition, it trivially vanishes for any symmetric state. 
Therefore, $I_\epsilon^{f}(\rho,H)$ is an asymmetry measure. Note that the $\epsilon$-smooth metric adjusted skew informations $I^f_\epsilon$ is not a faithful measure as it vanishes for a state $\epsilon$-close to a symmetric state.
In Appendix~\ref{app:properties_smooth_masi}, we prove that $I_\epsilon^{f}(\rho,H)$ inherits the convexity of metric adjusted skew informations.

By using the smooth metric adjusted skew informations, we now define the sup- and inf-smooth metric adjusted skew information rates as follows:
\begin{align}
    I_+^{f}(\widehat{\rho},\widehat{H})&\coloneqq \lim_{\epsilon\to 0^+}\limsup_{m\to\infty}\frac{1}{m}I_{\epsilon}^{f}(\rho_m,H_m),\\
    I_-^{f}(\widehat{\rho},\widehat{H})&\coloneqq \lim_{\epsilon\to 0^+}\liminf_{m\to\infty}\frac{1}{m}I_\epsilon^{f}(\rho_m,H_m).
\end{align}
We also call them \textit{the smooth metric adjusted skew information rates} for short. 
We remark that since $\limsup_{m\to\infty}\frac{1}{m}I_{\epsilon}^{f}(\rho_m,H_m)$ and $\liminf_{m\to\infty}\frac{1}{m}I_\epsilon^{f}(\rho_m,H_m)$ are monotonic functions of $\epsilon$, their right-hand side limits $I_\pm^{f}(\widehat{\rho},\widehat{H})$ always exist. 

A key property of the smooth metric adjusted skew information rates is the following:
\begin{theorem}\label{thm:smooth_masi_rates_valid_measures}
For any regular operator monotone function $f$, the smooth metric adjusted skew information rates are asymptotic measures in the resource theory of asymmetry. That is, the following two conditions are satisfied:
\begin{enumerate}[1)]
    \item Monotonicity: If $(\widehat{\rho},\widehat{H})\cova (\widehat{\sigma},\widehat{H'})$, then $I_+^{f}(\widehat{\rho},\widehat{H})\geq I_+^{f}(\widehat{\sigma},\widehat{H'})$ and $I_-^{f}(\widehat{\rho},\widehat{H})\geq I_-^{f}(\widehat{\sigma},\widehat{H'})$. 
    \item If $\widehat{\rho}$ is a sequence of symmetric states, then $I_+^{f}(\widehat{\rho},\widehat{H})=I_-^{f}(\widehat{\rho},\widehat{H})=0$. 
\end{enumerate}
\end{theorem}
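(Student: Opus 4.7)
The plan is to derive both conditions from the one-shot monotonicity of $I_\epsilon^f$ (established in Appendix~\ref{app:properties_smooth_masi}) together with a standard smoothing argument that absorbs the asymptotic conversion error into the smoothing parameter via the triangle inequality. Condition 2) is essentially immediate: if each $\rho_m$ is symmetric then $I^f(\rho_m, H_m) = 0$ by the one-shot vanishing property for metric adjusted skew information, and since $\rho_m \in B^\epsilon(\rho_m)$ and $I^f \geq 0$, the infimum defining $I_\epsilon^f(\rho_m, H_m)$ is zero for every $m$ and every $\epsilon \in (0,1]$, so $I_\pm^f(\widehat{\rho}, \widehat{H}) = 0$.

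For condition 1), suppose $(\widehat{\rho},\widehat{H}) \cova (\widehat{\sigma},\widehat{H'})$ and fix $\epsilon, \epsilon' \in (0,1]$. By the definition of $\cova$ there exists a sequence $\{\mathcal{E}_m\}$ of covariant operations with $\limsup_m D(\mathcal{E}_m(\rho_m), \sigma_m) \leq \epsilon'$. The first step is to invoke the one-shot monotonicity of $I_\epsilon^f$ under covariant channels to get $I_\epsilon^f(\rho_m, H_m) \geq I_\epsilon^f(\mathcal{E}_m(\rho_m), H_m')$. The second step is a ball-inclusion argument: for any $\delta > 0$ and all sufficiently large $m$, the triangle inequality gives $B^\epsilon(\mathcal{E}_m(\rho_m)) \subseteq B^{\epsilon + \epsilon' + \delta}(\sigma_m)$, and enlarging the feasible set in the infimum that defines $I_\epsilon^f$ yields $I_\epsilon^f(\mathcal{E}_m(\rho_m), H_m') \geq I_{\epsilon + \epsilon' + \delta}^f(\sigma_m, H_m')$. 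Chaining the two inequalities, for all sufficiently large $m$,
\begin{align}
I_\epsilon^f(\rho_m, H_m) \geq I_{\epsilon + \epsilon' + \delta}^f(\sigma_m, H_m').
\end{align}

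The final step is to divide by $m$, take $\limsup_m$ (respectively $\liminf_m$) of both sides — which preserves the eventual inequality — and then send $\delta \to 0^+$, $\epsilon' \to 0^+$, and finally $\epsilon \to 0^+$. Using that $\eta \mapsto \limsup_m \frac{1}{m} I_\eta^f(\sigma_m, H_m')$ is nonincreasing in the smoothing parameter, the appropriate right-limit in $\eta$ exists and coincides with the limit as $\eta \to 0^+$ when $\epsilon \to 0^+$, so the iterated limits on the right-hand side collapse to $I_+^f(\widehat{\sigma}, \widehat{H'})$ (respectively $I_-^f(\widehat{\sigma}, \widehat{H'})$); the left-hand side converges to $I_+^f(\widehat{\rho}, \widehat{H})$ (respectively $I_-^f(\widehat{\rho}, \widehat{H})$) by definition. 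The main obstacle, and it is a minor one, is precisely this bookkeeping of the three nested limits, ensuring that the extra $\epsilon' + \delta$ appearing in the smoothing parameter on the right is fully absorbed; once the monotonicity of the $\limsup$/$\liminf$ in $\eta$ is invoked, this reduces to a routine interchange.
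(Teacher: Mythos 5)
Your proposal is correct and follows essentially the same route as the paper's Appendix~\ref{app:masi_rate} proof: one-shot monotonicity of $I_\epsilon^f$ under the covariant channels realizing the asymptotic conversion, a triangle-inequality ball-inclusion that enlarges the smoothing parameter to absorb the conversion error (the paper uses $\epsilon\mapsto 2\epsilon$ where you use $\epsilon+\epsilon'+\delta$), and monotonicity in the smoothing parameter to collapse the nested limits. The treatment of condition 2) also matches the paper's observation that $I_\epsilon^f$ trivially vanishes on symmetric states.
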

This is one of the main results of this paper. The proofs of Theorem~\ref{thm:smooth_masi_rates_valid_measures} and the convexity of smooth metric adjusted skew information rates are provided in Appendix~\ref{app:masi_rate}. 

Note that the monotonicity implies that for any sequence of Hamiltonians $\widehat{H}$, 
\begin{align}
    I^{f}_+(\widehat{\rho},\widehat{H})&=I^{f}_+(\widehat{\sigma},\widehat{H}),\\ I^{f}_-(\widehat{\rho},\widehat{H})&=I^{f}_-(\widehat{\sigma},\widehat{H})
\end{align}
hold if $\lim_{m\to\infty}D(\rho_m,\sigma_m)=0$. 

We remark that from Eq.~\eqref{eq:inequalities_skew_infos}, it holds
\begin{align}
    &\frac{1}{2f(0)}I_+^f(\rho,H)\geq I_+^{f_{\mathrm{SLD}}}(\rho,H)\geq I_+^f(\rho,H),\label{eq:inequalities_skew_info_rates}\\
    &\frac{1}{2f(0)}I_-^f(\rho,H)\geq I_-^{f_{\mathrm{SLD}}}(\rho,H)\geq I_-^f(\rho,H)
\end{align}
for any regular standard monotone function $f$. Therefore, smooth skew information rates for $f_{\mathrm{SLD}}$ are the largest among the family of smooth metric adjusted skew information rates. 

Another main result of this paper is the following:
\begin{theorem}\label{thm:smooth_masi_rates_iid}
Let $\psi$ be a pure state having period $2\pi$ for a Hamiltonian $H$. Assume that the third absolute moment of the Hamiltonian is finite, i.e., $\braket{\psi||H|^3|\psi}<\infty$. For a positive parameter $R>0$, define $\widehat{\psi}_{\mathrm{iid}}(R)\coloneqq \{\psi^{\otimes \ceil{Rm}}\}_m$ and $\widehat{H}_{\mathrm{iid}}(R)\coloneqq \{H_{\mathrm{iid},\ceil{Rm}}\}_m$, where $H_{\mathrm{iid},k}\coloneqq \sum_{i=1}^k \mathbb{I}^{\otimes i-1}\otimes H\otimes\mathbb{I}^{\otimes k-i} $. The smooth metric adjusted skew information rates for this i.i.d. sequence are given by
\begin{align}
    &I_{+}^f(\widehat{\psi}_{\mathrm{iid}}(R),\widehat{H}_{\mathrm{iid}}(R))= I_{-}^f(\widehat{\psi}_{\mathrm{iid}}(R),\widehat{H}_{\mathrm{iid}}(R))\nonumber\\
    &=\lim_{m\to\infty}\frac{1}{m}I^{f}(\psi^{\otimes \ceil{Rm}},H_{\ceil{Rm}})=I^f\left(\psi,H\right)R.
\end{align}
\end{theorem}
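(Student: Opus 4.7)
The plan is to establish the chain of equalities $I_{+}^f=I_{-}^f=\lim_m\frac{1}{m}I^f(\psi^{\otimes\lceil Rm\rceil},H_{\mathrm{iid},\lceil Rm\rceil})=RI^f(\psi,H)$ in two stages: the unsmoothed rate together with the upper bound on $I_+^f$ is straightforward, whereas the matching lower bound $I_-^f\geq RI^f(\psi,H)$ requires a genuine local asymptotic analysis. For the easy part, recall that $I^f(\psi,H)=\mathrm{Var}(\psi,H)$ on pure states and variance is additive on product states with commuting local Hamiltonians, so $I^f(\psi^{\otimes k},H_{\mathrm{iid},k})=kI^f(\psi,H)$. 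This yields the rightmost equality; then choosing $\sigma=\psi^{\otimes\lceil Rm\rceil}$ in the infimum defining $I_\epsilon^f$ gives $I_\epsilon^f(\psi^{\otimes\lceil Rm\rceil},H_{\mathrm{iid},\lceil Rm\rceil})\leq\lceil Rm\rceil I^f(\psi,H)$ for every $\epsilon$, so $I_+^f\leq RI^f(\psi,H)$ and a fortiori $I_-^f\leq RI^f(\psi,H)$.

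For the lower bound $I_-^f\geq RI^f(\psi,H)$, I would prove a local asymptotic estimate: for every $\delta>0$ there exists $\epsilon_0>0$ such that, for all $\epsilon\in(0,\epsilon_0]$, all sufficiently large $m$ and every $\sigma_m\in B^\epsilon(\psi^{\otimes\lceil Rm\rceil})$, $\frac{1}{m}I^f(\sigma_m,H_{\mathrm{iid},\lceil Rm\rceil})\geq RI^f(\psi,H)-\delta$. The central obstacle is that the spectral radius of $H_{\mathrm{iid},k}$ typically grows without bound in $k$, so the naive continuity estimate from \eqref{eq:skew_info_def} using $|\braket{i|H|j}|^2\leq\|H_{\mathrm{iid},k}\|^2$ destroys the target rate. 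I would sidestep this through concentration: by the Berry--Esseen theorem, available thanks to the finite third moment $\braket{\psi||H|^3|\psi}<\infty$, the energy distribution of $\psi^{\otimes k}$ is concentrated in a window of width $O(\sqrt{k\log k})$ around $k\mu$ with $\mu=\braket{\psi|H|\psi}$. The $2\pi$-periodicity forces the spectrum of $H$ on the support of $\psi$ to be integer-valued, so the restricted centered Hamiltonian $(H_{\mathrm{iid},k}-k\mu\mathbb{I})\Pi_k$ has operator norm $O(\sqrt{k\log k})$ on the typical-energy projector $\Pi_k$; invariance of $I^f$ under $H\mapsto H+c\mathbb{I}$ lets me work with this effectively bounded Hamiltonian. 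Note that $\Pi_k$ commutes with $H_{\mathrm{iid},k}$, so the associated pinching map is covariant and compatible with monotonicity of $I^f$.

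With this reduction, I would bound $I^f(\sigma_m,H_{\mathrm{iid},\lceil Rm\rceil})$ from below via its eigendecomposition $\sigma_m=\sum_i\lambda_i\ket{v_i}\bra{v_i}$. Since $D(\sigma_m,\psi^{\otimes\lceil Rm\rceil})\leq\epsilon$, a direct computation forces $\lambda_0\geq1-2\epsilon$ and $|\braket{v_0|\psi^{\otimes\lceil Rm\rceil}}|^2\geq1-O(\epsilon)$. Exploiting the asymptotic identity $f(x)/x\to f(0)$ as $x\to\infty$, which follows from $f(x)=xf(1/x)$ and continuity of $f$ at $0$, the cross terms $(0,j)$ and $(j,0)$ in \eqref{eq:skew_info_def} sum to $\lambda_0\,\mathrm{Var}(v_0,H_{\mathrm{iid},\lceil Rm\rceil})$ to leading order. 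An expansion $v_0=\cos\theta\,\psi^{\otimes\lceil Rm\rceil}+\sin\theta\,\chi$ with $\sin^2\theta\leq O(\epsilon)$, combined with the effective norm bound and Cauchy--Schwarz estimates on $\braket{\psi^{\otimes k}|H|\chi}$ and $\braket{\psi^{\otimes k}|H^2|\chi}$ using the moments of $H$, then yields $\mathrm{Var}(v_0,H_{\mathrm{iid},\lceil Rm\rceil})\geq\lceil Rm\rceil\mathrm{Var}(\psi,H)-O(\sqrt{\epsilon})\lceil Rm\rceil$. Dividing by $m$ and sending $\epsilon\to0^+$ completes the lower bound.

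The step I expect to be most demanding is making the cross-term approximation $(0,j)+(j,0)\simeq\lambda_0\mathrm{Var}(v_0,H)$ quantitatively uniform across all admissible $\sigma_m$: one must control the error arising from finite-$\lambda_j/\lambda_0$ deviations from the limit $f(x)/x\to f(0)$ simultaneously for all $j\geq1$ in a potentially high-dimensional eigendecomposition, and show that the remaining $(i,j)$ terms with $i,j\geq1$ do not contribute macroscopically in a direction that could interfere with the variance perturbation estimate. Both pieces rely essentially on Berry--Esseen (to discard the weight on atypical energies) together with the period-$2\pi$ quantization (to keep the residual spectrum discrete enough for perturbation bounds to close), which is why both hypotheses of the theorem appear.
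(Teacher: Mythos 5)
Your overall architecture matches the paper's: the additivity argument and the trivial choice $\sigma=\psi^{\otimes\lceil Rm\rceil}$ give the upper bound, and the real content is a uniform lower bound on $\frac{1}{m}I^f(\sigma_m,H_{\mathrm{iid},\lceil Rm\rceil})$ over the $\epsilon$-ball (the paper's Lemma~\ref{lem:local_minima_MASI}). Your reduction to the top eigenvector $v_0$ of $\sigma_m$ via the $(0,j)$ cross terms is exactly the paper's Lemma~\ref{lem:masi_lowerbound}, and the uniformity issue you worry about there is resolved cleanly by monotonicity of $f$: since $\lambda_j/\lambda_0\leq\delta/(1-\delta)$ for all $j\geq1$, one gets the single uniform prefactor $f(0)(1-2\delta)^2/f(\delta/(1-\delta))\to1$, with no need to invoke $f(x)/x\to f(0)$ term by term; the remaining $(i,j)$ terms with $i,j\geq1$ are nonnegative and simply dropped.

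The genuine gap is in your truncation window. You restrict to energies within $O(\sqrt{k\log k})$ of the mean, but then the perturbation $\chi$ can sit anywhere in that window, so $\sin^2\theta\,\langle\chi|H|\chi\rangle=O(\epsilon\sqrt{k\log k})$ after centering, and its square contributes $O(\epsilon^2 k\log k)$ to $\langle H\rangle_{v_0}^2$. Divided by $k$ this is $\epsilon^2\log k\to\infty$ at fixed $\epsilon$, which destroys the target bound $\frac{1}{k}\mathrm{Var}(v_0,H)\geq\mathrm{Var}(\psi,H)-\delta(\epsilon)$ for large $m$ — and that bound must hold uniformly in $m$ at fixed $\epsilon$ before you send $\epsilon\to0$. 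The window half-width must instead be $\alpha_\epsilon\sqrt{k}$ with $\alpha_\epsilon$ independent of $k$; but then a non-vanishing (in $k$) fraction of the energy distribution lies outside the window and carries a non-negligible share of the variance, and this tail contribution has to be controlled separately while balancing $\epsilon^2\alpha_\epsilon^2\to0$ against the tail-variance loss as $\epsilon\to0$. This balancing act is precisely the content of the paper's Lemma~\ref{lem:variance_lowerbound}, which recasts the whole step as a classical statement — any distribution within total-variation distance $\epsilon$ of a Poisson distribution $\mathrm{P}_{m\lambda}$ has variance at least $\mathrm{Var}(\mathrm{P}_{m\lambda})-\gamma_\lambda(\epsilon)m$ — using the Poisson limit theorem for the energy distribution of $\psi^{\otimes k}$ (this is where the finite third moment and the integer spectrum enter) and the explicit choice $\alpha_\epsilon=g^{-1}(1-\epsilon)/\sqrt{\lambda}$. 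A related caution: your Cauchy--Schwarz bound on $\langle\Psi|H^2|\chi\rangle$ needs $\langle\Psi|H^4|\Psi\rangle^{1/2}=O(k)$, which requires a finite single-copy fourth moment unless the truncation has already been performed; the theorem only assumes a finite third moment, so the order of operations matters.
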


This theorem is an immediate corollary of the following lemma, proven in Appendix~\ref{app:smooth_masi_iid}:
\begin{lemma}\label{lem:local_minima_MASI}
Let $\psi$ be a pure state with period $2\pi$ with a Hamiltonian $H$. Assume that the absolute third moment is finite, i.e.,  $\braket{\psi||H|^3|\psi}<\infty$. 
Fix $\epsilon$ to be a sufficiently small real parameter. Let $\widehat{\rho}=\{\rho_m\}_m$ be a sequence of states such that $\rho_m\in B^\epsilon(\psi^{\otimes \ceil{Rm}})$ for all sufficiently large $m$ with a real parameter $R>0$, where $B^{\epsilon}(\rho)$ is the $\epsilon$-ball in the state space defined by $B^{\epsilon}(\rho)\coloneqq \{\sigma: \text{ states}\mid D(\rho,\sigma)\leq \epsilon\}$. Then there exists a real function $\delta^f(\epsilon)$ of $\epsilon$ such that $\lim_{\epsilon\to 0}\delta^f(\epsilon)=0$ and
\begin{align}
    &I^f\left(\rho_m,H_{\mathrm{iid},\ceil{mR}}\right)\nonumber\\
    &\quad \geq I^f\left(\psi^{\otimes \ceil{Rm}},H_{\mathrm{iid},\ceil{mR}}\right)-m \delta^f(\epsilon)+o(m) 
\end{align}
as $m\to\infty$.
\end{lemma}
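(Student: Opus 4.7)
The plan is to lower-bound $I^f(\rho_m, H_{\mathrm{iid},n})$ (with $n = \ceil{Rm}$) by isolating the dominant-eigenvector contribution and reducing to a variance estimate for a pure state close to $\ket{\psi^{\otimes n}}$. Throughout, I write $\widetilde H_n \coloneqq H_{\mathrm{iid},n} - n\braket{\psi|H|\psi}\mathbb{I}$, which leaves the MASI invariant, and set $\sigma^2 \coloneqq \mathrm{Var}(\psi,H)$. Since $D(\rho_m,\psi^{\otimes n}) \leq \epsilon$, Fuchs--van de Graaf gives $\braket{\psi^{\otimes n}|\rho_m|\psi^{\otimes n}} \geq (1-\epsilon)^2$. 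Consequently, for the spectral decomposition $\rho_m = \sum_i \lambda_i \ket{\phi_i}\bra{\phi_i}$ with $\lambda_1 \geq \lambda_2 \geq \cdots$, one has $\lambda_1 \geq 1-O(\epsilon)$ and, after a short calculation, $|\braket{\phi_1|\psi^{\otimes n}}|^2 \geq 1 - O(\epsilon)$.

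Using the spectral formula in Eq.~\eqref{eq:skew_info_def} and the symmetry $G_f(x,y) = G_f(y,x)$ of $G_f(x,y) \coloneqq (x-y)^2/(yf(x/y))$ (inherited from $f(x)=xf(1/x)$), retaining only the terms with $i=1$ or $j=1$ gives
\begin{equation*}
I^f(\rho_m,H_n) \geq f(0)\sum_{j \neq 1} G_f(\lambda_1,\lambda_j)\, |\braket{\phi_1|H_n|\phi_j}|^2.
\end{equation*}
For each $j\neq 1$, $\lambda_j \leq 1-\lambda_1 \leq O(\epsilon)$, so $\lambda_j/\lambda_1 = O(\epsilon)$; continuity of $f$ at $0$ together with regularity $f(0)>0$ then yields $f(0)\,G_f(\lambda_1,\lambda_j) \geq 1-O(\epsilon)$. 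Completeness of $\{\ket{\phi_j}\}$ turns the remaining sum into a variance: $\sum_{j\neq 1} |\braket{\phi_1|H_n|\phi_j}|^2 = \mathrm{Var}(\phi_1, H_n)$. Hence $I^f(\rho_m, H_n) \geq (1-O(\epsilon))\, \mathrm{Var}(\phi_1,H_n)$.

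The crux is therefore to prove $\mathrm{Var}(\phi_1,H_n) \geq n\sigma^2 - n\,\gamma(\epsilon) + o(n)$ with $\gamma(\epsilon) \to 0$. I would decompose $\ket{\phi_1} = \alpha\ket{\psi^{\otimes n}} + \beta\ket{\chi}$ with $\braket{\chi|\psi^{\otimes n}}=0$ and $|\beta|^2 = O(\epsilon)$, expand $\braket{\widetilde H_n^k}_{\phi_1}$ for $k=1,2$, use $\braket{\widetilde H_n}_{\psi^{\otimes n}}=0$, and control the resulting cross terms via Cauchy--Schwarz. The relevant quantities are $\|\widetilde H_n\psi^{\otimes n}\| = \sqrt{n\sigma^2}$ and $\|\widetilde H_n^2 \psi^{\otimes n}\| = O(n\sigma^2)$, the latter governed by higher central moments of $H$ on $\psi$; the period-$2\pi$ assumption (integer spectrum on the support of $\psi$) and the finite absolute third moment underwrite a Berry--Esseen estimate that pins down the required concentration of the energy distribution of $\psi^{\otimes n}$. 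Combining the bounds, observing that the $q\,\braket{\widetilde H_n^2}_\chi - 2q^2\braket{\widetilde H_n}_\chi^2$ contribution is non-negative for $q \leq 1/2$, and absorbing sublinear corrections, one arrives at $\mathrm{Var}(\phi_1,H_n) \geq (1-O(\sqrt{\epsilon}))\,n\sigma^2 + o(n)$. The lemma then follows with $\delta^f(\epsilon) = O(\sqrt{\epsilon})$ using $n = Rm + O(1)$.

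The main obstacle is precisely this variance lower bound, because variance is \emph{not} asymptotically continuous --- indeed the very discontinuity highlighted earlier in the paper would, in principle, permit a $\Theta(n)$ drop in variance for a state $\epsilon$-close in trace distance to $\psi^{\otimes n}$. The lattice spectrum and the finite third moment are what disallow such adversarial configurations: they force the energy distribution of $\psi^{\otimes n}$ to concentrate sharply enough that any small-norm orthogonal deviation $\ket{\chi}$ can only couple to $\ket{\psi^{\otimes n}}$ through $\widetilde H_n$ and $\widetilde H_n^2$ in a controlled way, leaving the variance short by at most $o(1)\cdot n$.
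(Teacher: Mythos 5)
Your reduction to a variance bound for the dominant eigenvector $\phi_1$ is sound and is essentially the paper's Lemma~\ref{lem:masi_lowerbound}: keeping only the $i=1$ or $j=1$ terms, using $f(x)=xf(1/x)$ and regularity of $f$ to make the prefactor $1-O(\epsilon)$, and completeness to identify the residual sum with $\mathrm{Var}(\phi_1,H_n)$ all check out. The gap is in the step you yourself flag as the crux, $\mathrm{Var}(\phi_1,H_n)\geq n\sigma^2-n\gamma(\epsilon)+o(n)$. Your mechanism needs $\|\widetilde H_n^2\psi^{\otimes n}\|=O(n)$ to control the cross term $\braket{\psi^{\otimes n}|\widetilde H_n^2|\chi}$ by Cauchy--Schwarz; but $\|\widetilde H_n^2\psi^{\otimes n}\|^2=\braket{\psi^{\otimes n}|\widetilde H_n^4|\psi^{\otimes n}}=n\mu_4+3n(n-1)\sigma^4$ is finite (and $O(n^2)$) only if the fourth central moment $\mu_4$ of $H$ in $\psi$ is finite, whereas the lemma assumes only $\braket{\psi||H|^3|\psi}<\infty$. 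If you instead bound the cross term using only quantities your hypotheses control, namely $|\braket{\psi^{\otimes n}|\widetilde H_n^2|\chi}|\leq\|\widetilde H_n\psi^{\otimes n}\|\,\|\widetilde H_n\chi\|=\sqrt{n\sigma^2\,T}$ with $T\coloneqq\braket{\chi|\widetilde H_n^2|\chi}$, the resulting quadratic form in $(\sqrt{T},\sqrt{n\sigma^2})$ is indefinite: at $T\sim n\sigma^2/q$ the cross term is of order $2n\sigma^2$ and cancels the leading term, so Cauchy--Schwarz alone does not exclude the $\Theta(n)$ variance drop you need to rule out. Your appeal to a Berry--Esseen-type concentration of the energy distribution names the right ingredient, but it is never actually wired into the moment computation; a global second-and-fourth-moment expansion cannot see the tail structure that concentration provides.

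The paper closes exactly this hole by leaving Hilbert space. Trace-distance closeness of $\Phi_m$ to $\psi^{\otimes n}$ implies total-variation closeness of their energy distributions; the finite-third-moment hypothesis gives TV-convergence of $p_{\psi^{\otimes n}}$ to a shifted Poisson distribution, Eq.~\eqref{eq:convergence_TP}; and a dedicated Lemma~\ref{lem:variance_lowerbound} shows that any probability distribution within TV distance $\epsilon$ of $\mathrm{P}_{m\lambda}$ has variance at least $\mathrm{Var}(\mathrm{P}_{m\lambda})-\gamma_\lambda(\epsilon)m$. That lemma is proved by splitting the integers into a window of width $\alpha_\epsilon\sqrt{m}$ around the mean and its complement, exploiting the Gaussian-like local behaviour and the Chebyshev tail of the Poisson, and minimizing a quadratic in the mean shift $\Delta\mu_m$; this truncation is the step your sketch would have to reproduce for the concentration to do any work. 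In short: either strengthen the hypothesis to a finite fourth moment (in which case your argument does close and is simpler than the paper's), or replace the global Cauchy--Schwarz estimate with a windowed analysis of the energy distribution as in Lemma~\ref{lem:variance_lowerbound}.
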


Intuitively, this lemma shows that $\frac{1}{m}I^f\left(\rho_m,H_{\mathrm{iid},\ceil{mR}}\right)$ approximately takes a local minimum around i.i.d. states  $\psi^{\otimes \ceil{Rm}}$ in the asymptotic regime. As a consequence, its smooth metric adjusted skew information rates become equal to the asymptotic rate of metric adjusted skew informations for i.i.d. states, proving Theorem~\ref{thm:smooth_masi_rates_iid}.

\section{Smooth metric adjusted skew information rates and the asymptotic conversion theory}\label{sec:cost_dist}
In this section, by using Theorems~\ref{thm:smooth_masi_rates_valid_measures} and \ref{thm:smooth_masi_rates_iid}, we relate smooth metric adjusted skew information rates to the coherence cost and the distillable coherence, which are the central quantities in the asymptotic conversion theory in the resource theory of asymmetry.

\subsection{Coherence cost and distillable coherence}
In order to introduce the coherence cost and the distillable coherence, let us briefly review the asymptotic conversion theory in the resource theory of asymmetry. 

Independent and identically distributed (i.i.d.) setup is one of the most important and fundamental regimes in the asymptotic conversion. In this setting, many copies of a state are converted to copies of another state under the assumption that the Hamiltonian is given by a sum of copies of a free Hamiltonian of a subsystem. Precisely, we consider a sequence of states $\widehat{\rho}_{\mathrm{iid}}=\{\rho^{\otimes m}\}_{m}$ with Hamiltonians $\widehat{H}_{\mathrm{iid}}=\{H_m\}_m$, where $H_m=\sum_{i=1}^m  \mathbb{I}^{\otimes (i-1)}\otimes H\otimes \mathbb{I}^{\otimes (m-i)}$.

The convertibility among i.i.d. pure states is studied in earlier research \cite{gour_resource_2008} and completed in \cite{marvian_operational_2022}. Suppose that two pure states $\psi$ and $\phi$ have the same period. For i.i.d. sequences $\widehat{\psi}_{\mathrm{iid}}=\{\psi^{\otimes m}\}$ and $\widehat{\phi}_{\mathrm{iid}}(R)=\{\phi^{\otimes \ceil{Rm}}\}$, it is shown that $\widehat{\psi}_{\mathrm{iid}}$ is convertible to $\widehat{\phi}_{\mathrm{iid}}(R)$ if and only if $R\leq \mathcal{F}(\psi,H)/\mathcal{F}(\phi,H')$. Here, $\mathcal{F}(\rho,H)$ denotes the symmetric logarithmic derivative (SLD) quantum Fisher information with respect to the one-parameter family of states $\rho_t=\{e^{-\ii Ht}\rho e^{\ii Ht}\}_{t\in\mathbb{R}}$, given by
\begin{align}
    \mathcal{F}(\rho,H)\coloneqq 2\sum_{i,j}\frac{(\lambda_i-\lambda_j)^2}{\lambda_i+\lambda_j}|\braket{i|H|j}|^2,
\end{align}
where $\rho=\sum_i\lambda_i\ket{i}\bra{i}$ is the eigenvalue decomposition. Conventionally, this quantity is simply called the quantum Fisher information. Note that the quantum Fisher information is related to the skew information of $f_{\mathrm{SLD}}=(1+x)/2$ as $4I^{f_{\mathrm{SLD}}}(\rho,H)=\mathcal{F}(\rho,H)$.

The above result in the i.i.d. regime shows that pure states with the same periods are equivalent coherence resources in the sense that they are interconvertible with a non-vanishing rate as long as the quantum Fisher information is non-zero. Since the period of a state can be set to $2\pi$ by rescaling the Hamiltonian, we can pick up any pure state with period $2\pi$ as a reference in analyzing convertibility.
Here we adopt a coherence bit
\begin{align}
    \phi_{\mathrm{coh}}\coloneqq \ket{\phi_{\mathrm{coh}}}\bra{\phi_{\mathrm{coh}}},\quad\ket{\phi_{\mathrm{coh}}}=\frac{1}{\sqrt{2}}\left(\ket{0}+\ket{1}\right)\label{eq:cbit_definition}
\end{align}
with Hamiltonian $H_{\mathrm{coh}}=\ket{1}\bra{1}$ as a reference. This state has period $2\pi$ and its quantum Fisher information is normalized to unity: $\mathcal{F}(\phi_{\mathrm{coh}},H_{\mathrm{coh}})=1$. For arbitrary sequences of states $\widehat{\rho}=\{\rho_m\}_{m=1}^\infty$ and Hamiltonians $\widehat{H}=\{H_m\}_{m=1}^\infty$, the coherence cost and the distillable coherence are defined by
\begin{align}
    &C_{\mathrm{cost}}\left(\widehat{\rho},\widehat{H}\right)\nonumber\\
    &\coloneqq \inf\left\{R\mid \left(\widehat{\phi_{\mathrm{coh}}}(R),\widehat{H_{\mathrm{coh}}}(R)\right)\cova \left(\widehat{\rho},\widehat{H}\right)\right\},\\
    &C_{\mathrm{dist}}\left(\widehat{\rho},\widehat{H}\right)\nonumber\\
    &\coloneqq \sup\left\{R\mid \left(\widehat{\rho},\widehat{H}\right)\cova \left(\widehat{\phi_{\mathrm{coh}}}(R),\widehat{H_{\mathrm{coh}}}(R)\right)
    \right\},
\end{align}
where $\widehat{H_{\mathrm{coh}}}(R)\coloneqq \{H_{\mathrm{coh},\ceil{Rm}}\}_m$ with $H_{\mathrm{coh},k}\coloneqq \sum_{i=1}^k\mathbb{I}^{\otimes i-1}\otimes H_{\mathrm{coh}}\otimes \mathbb{I}^{\otimes k-i}$  and $\widehat{\phi_{\mathrm{coh}}}(R)\coloneqq\{ \phi_{\mathrm{coh}}^{\otimes \ceil{Rm}}\}_m$. 

The above result \cite{marvian_operational_2022} for i.i.d. pure states can be restated as
\begin{align}
    C_{\mathrm{cost}}\left(\widehat{\psi}_{\mathrm{iid}},\widehat{H}_{\mathrm{iid}}\right)=C_{\mathrm{dist}}\left(\widehat{\psi}_{\mathrm{iid}},\widehat{H}_{\mathrm{iid}}\right)=\mathcal{F}\left(\psi,H\right),\label{eq:iid_cost_dist}
\end{align}
where the period of $\psi$ is set to be $2\pi$ and $\widehat{\psi}_{\mathrm{iid}}=\{\psi^{\otimes m}\}_m$. This result for the coherence cost is directly extended to a general i.i.d. states $\widehat{\rho}_{\mathrm{iid}}=\{\rho^{\otimes m}\}$ as $ C_{\mathrm{cost}}(\widehat{\rho}_{\mathrm{iid}},\widehat{H}_{\mathrm{iid}})=\mathcal{F}\left(\rho,H\right)$ \cite{marvian_operational_2022}. Although no general formula on the distillable coherence is known for an i.i.d. mixed states, it is known that i.i.d. mixed states typically have vanishing distillable coherence \cite{marvian_coherence_2020}. 

\subsection{Smooth metric adjusted skew information rates, the coherence cost and the distillable coherence}
We are now ready to relate the smooth metric adjusted skew information rates to the coherence cost and the distillable coherence by using Lieb-Yngvason's non-equilibrium thermodynamics \cite{lieb_entropy_2013}. Note that although the original aim of Lieb-Yngvason's non-equilibrium thermodynamics was to extend the notion of thermodynamic entropy to non-equilibrium states, some of their arguments are applicable to a generic preorder relation. 

Concretely, we here prove the following inequalities, proving that the smooth metric adjusted skew information rates yield a lower bound of the coherence cost and an upper bound of the distillable coherence:
\begin{theorem}\label{thm:inequality}
For \textit{any} sequences $\widehat{\rho}$ and $\widehat{H}$, it holds
\begin{align}
    C_{\mathrm{cost}}(\widehat{\rho},\widehat{H})\geq 4I_{+}^f(\widehat{\rho},\widehat{H})\geq 4I_{-}^f(\widehat{\rho},\widehat{H})\geq C_{\mathrm{dist}}(\widehat{\rho},\widehat{H})\label{eq:inequalities_cost_rates_dist}.
\end{align}
\end{theorem}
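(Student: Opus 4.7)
The plan is to deduce both outer inequalities from Theorems~\ref{thm:smooth_masi_rates_valid_measures} and \ref{thm:smooth_masi_rates_iid} via a single, essentially thermodynamic, argument: any asymptotic monotone provides a lower bound on the coherence cost and an upper bound on the distillable coherence once its value on the reference resource $(\phi_{\mathrm{coh}},H_{\mathrm{coh}})$ is known. The middle inequality $I_+^f(\widehat{\rho},\widehat{H})\geq I_-^f(\widehat{\rho},\widehat{H})$ is immediate from $\limsup_{m\to\infty}\geq \liminf_{m\to\infty}$ applied at every fixed $\epsilon$ before taking $\epsilon\to 0^+$, so the work concentrates on the two outer inequalities.

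The key calibration step is to evaluate the smooth MASI rates on the i.i.d.\ coherence-bit resource. The state $\phi_{\mathrm{coh}}$ is pure, has period $2\pi$ under $H_{\mathrm{coh}}=\ket{1}\bra{1}$, and trivially satisfies the third-moment hypothesis. Since $I^f(\psi,H)=\mathrm{Var}(\psi,H)$ for any pure state and $\mathrm{Var}(\phi_{\mathrm{coh}},H_{\mathrm{coh}})=1/4$, Theorem~\ref{thm:smooth_masi_rates_iid} yields
\begin{align}
    I_{+}^f\bigl(\widehat{\phi_{\mathrm{coh}}}(R),\widehat{H_{\mathrm{coh}}}(R)\bigr)=I_{-}^f\bigl(\widehat{\phi_{\mathrm{coh}}}(R),\widehat{H_{\mathrm{coh}}}(R)\bigr)=\frac{R}{4}.
\end{align}

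For the left inequality, I first note that the cost-admissible set $\mathcal{S}_{\mathrm{cost}}\coloneqq\{R>0:(\widehat{\phi_{\mathrm{coh}}}(R),\widehat{H_{\mathrm{coh}}}(R))\cova(\widehat{\rho},\widehat{H})\}$ is upward closed: if $R'\leq R$ lies in $\mathcal{S}_{\mathrm{cost}}$, then discarding the extra $\ceil{Rm}-\ceil{R'm}$ coherence-bit subsystems via partial trace is a covariant map at every $m$, yielding an exact conversion from the $R$-rate i.i.d.\ coherence-bit sequence to the $R'$-rate one; by transitivity of $\cova$ this gives $R\in\mathcal{S}_{\mathrm{cost}}$. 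Hence for every $R>C_{\mathrm{cost}}(\widehat{\rho},\widehat{H})$ the conversion is available, and the monotonicity of $I_+^f$ from Theorem~\ref{thm:smooth_masi_rates_valid_measures}, combined with the calibration above, gives $R/4\geq I_+^f(\widehat{\rho},\widehat{H})$. Infimising over such $R$ delivers $C_{\mathrm{cost}}(\widehat{\rho},\widehat{H})\geq 4I_+^f(\widehat{\rho},\widehat{H})$; the case $C_{\mathrm{cost}}=\infty$ is trivial.

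The right inequality follows by a mirror-image argument: the distillation set $\mathcal{S}_{\mathrm{dist}}\coloneqq\{R>0:(\widehat{\rho},\widehat{H})\cova(\widehat{\phi_{\mathrm{coh}}}(R),\widehat{H_{\mathrm{coh}}}(R))\}$ is downward closed by the same partial-trace-after-distillation trick, so for each $R<C_{\mathrm{dist}}(\widehat{\rho},\widehat{H})$ the conversion is available and monotonicity of $I_-^f$ gives $I_-^f(\widehat{\rho},\widehat{H})\geq R/4$; supremising recovers $4I_-^f(\widehat{\rho},\widehat{H})\geq C_{\mathrm{dist}}(\widehat{\rho},\widehat{H})$. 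The only substantive input is the calibration, which rests on Theorem~\ref{thm:smooth_masi_rates_iid}; everything else is the preorder manipulation that Lieb and Yngvason use to extract state functions from convertibility relations. I do not expect a deep obstacle, only the bookkeeping required to confirm that the partial-trace steps composing with the asymptotically vanishing-error channels preserve the $\limsup_m D(\mathcal{E}_m(\rho_m),\sigma_m)\leq\epsilon$ condition; this is straightforward since discarding subsystems is contractive in trace distance.
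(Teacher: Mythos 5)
Your proposal is correct and follows essentially the same route as the paper: calibrate $I_\pm^f$ on the i.i.d.\ coherence-bit sequence via Theorem~\ref{thm:smooth_masi_rates_iid} (giving $R/4$), then apply the monotonicity of Theorem~\ref{thm:smooth_masi_rates_valid_measures} to any admissible conversion and take the infimum/supremum over rates. The only difference is that you spell out the upward/downward closedness of the admissible rate sets via covariant partial traces, a bookkeeping step the paper leaves implicit when it asserts that the conversion at rate $C_{\mathrm{cost}}+\delta$ is available.
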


\begin{proof}

Suppose that $C_{\mathrm{cost}}(\widehat{\rho},\widehat{H})<\infty$. For any $\delta>0$, define $R_\delta\coloneqq C_{\mathrm{cost}}(\widehat{\rho},\widehat{H})+\delta$. Then it holds
\begin{align}
   \left(\widehat{\phi_{\mathrm{coh}}}\left(R_\delta\right),\widehat{H}_{\mathrm{coh}}\left(R_\delta\right)\right)\cova \left(\widehat{\rho},\widehat{H}\right).
\end{align}
From the monotonicity of smooth metric adjusted skew information rates, we get
\begin{align}
    I_{+}^f\left(\widehat{\phi_{\mathrm{coh}}}\left(R_\delta\right),\widehat{H}_{\mathrm{coh}}\left(R_\delta\right)\right)\geq I_{+}^f\left(\widehat{\rho},\widehat{H}\right).
\end{align}
Since
\begin{align}
    I_{+}^f\left(\widehat{\phi_{\mathrm{coh}}}\left(R_\delta\right),\widehat{H}_{\mathrm{coh}}\left(R_\delta\right)\right)=\frac{R_\delta}{4}
\end{align}
holds from Theorem~\ref{thm:smooth_masi_rates_iid}, we get
\begin{align}
    C_{\mathrm{cost}}(\widehat{\rho},\widehat{H})+\delta\geq 4I_{+}^f(\widehat{\rho},\widehat{H})
\end{align}
for any $\delta>0$ and hence
\begin{align}
    C_{\mathrm{cost}}(\widehat{\rho},\widehat{H})\geq 4I_{+}^f(\widehat{\rho},\widehat{H}).
\end{align}
Note that when $C_{\mathrm{cost}}(\widehat{\rho},\widehat{H})>+\infty$, $C_{\mathrm{cost}}(\widehat{\rho},\widehat{H})\geq I_{+}^f(\widehat{\rho},\widehat{H})$ is trivially true. By a similar argument, it is shown that 
\begin{align}
    4 I_{-}^f(\widehat{\rho},\widehat{H})\geq C_{\mathrm{dist}}(\widehat{\rho},\widehat{H}).
\end{align}
\end{proof}

Among these inequalities, the tightest lower bound for the coherence cost can be obtained by $I_{+}^{f_{\mathrm{SLD}}}(\widehat{\rho},\widehat{H})$ as can be directly shown from Eq.~\eqref{eq:inequalities_skew_info_rates}. 

Theorem~\ref{thm:inequality} is of fundamental importance since it clarifies a relation among smooth metric adjusted skew information rates and quantities with operational meaning, i.e., the coherence cost and the distillable coherence. Although the proof provided here is straightforward, the result itself is non-trivial since it is derived from Theorems~\ref{thm:smooth_masi_rates_valid_measures} and \ref{thm:smooth_masi_rates_iid}. To understand its importance, we shall apply Theorem~\ref{thm:inequality} to explicit examples in Sec.~\ref{sec:applications}.

\subsection{Relation to the spectral QFI rates and comparison to entropic quantities}

For pure states with finite periods, asymptotic conversion theory has been established beyond the i.i.d. regime \cite{yamaguchi_beyond_2022}. This result shows that the coherence cost and the distillable coherence for any sequence of pure states are given by the spectral sup- and inf-quantum Fisher information rates, respectively. By using Theorem~\ref{thm:inequality}, it is straightforward to show that the smooth metric adjusted skew information rates give lower and upper bounds for the spectral quantum Fisher information rates. A similar relation is known in entropy rates and the spectral entropy rates. This subsection aims to compare these relations and explore their similarities and differences. 

Let us first briefly review the result in \cite{yamaguchi_beyond_2022}. The definitions of quantities reviewed here are somewhat involved and therefore postponed to Appendix~\ref{app:spectral_qfi}. In \cite{yamaguchi_beyond_2022}, two quantities $\mathcal{F}_{\max}(\psi,H)$ and $\mathcal{F}_{\min}(\psi,H)$, called the max- and min-quantum Fisher information, are introduced for pure states with finite periods. It is proven \cite{yamaguchi_beyond_2022} that they give upper and lower bounds of the quantum Fisher information:
\begin{align}
    \mathcal{F}_{\max}\left(\psi,H\right)\geq \mathcal{F}\left(\psi,H\right)\geq \mathcal{F}_{\min}\left(\psi,H\right).\label{eq:max_min}
\end{align}
For a sequence of pure states with a finite period, the spectral sup- and inf-quantum Fisher information rates are defined by
\begin{align}
    \overline{\mathcal{F}}(\widehat{\psi},\widehat{H})&\coloneqq \lim_{\epsilon\to 0}\limsup_{m\to\infty}\frac{1}{m}\mathcal{F}_{\max}^\epsilon\left(\psi_m,H_m\right),\\
    \underline{\mathcal{F}}(\widehat{\psi},\widehat{H})&\coloneqq \lim_{\epsilon\to 0}\liminf_{m\to\infty}\frac{1}{m}\mathcal{F}_{\min}^\epsilon\left(\psi_m,H_m\right),
\end{align}
where $\mathcal{F}_{\max}^\epsilon$ and $\mathcal{F}_{\min}^\epsilon$ are smooth max- and min-quantum Fisher information. See Appendix~\ref{app:spectral_qfi} for their definition. It is proven \cite{yamaguchi_beyond_2022} that the coherence cost and the distillable coherence are given by
\begin{align}
    C_{\mathrm{cost}}(\widehat{\psi},\widehat{H})=\overline{\mathcal{F}}(\widehat{\psi},\widehat{H}),\quad 
    C_{\mathrm{dist}}(\widehat{\psi},\widehat{H})&=\underline{\mathcal{F}}(\widehat{\psi},\widehat{H}).
\end{align}
Therefore, Theorem~\ref{thm:inequality} implies 
\begin{align}
    \overline{\mathcal{F}}(\widehat{\psi},\widehat{H})\geq 4I^f_+(\widehat{\psi},\widehat{H})\geq 4I^f_-(\widehat{\psi},\widehat{H})\geq  \underline{\mathcal{F}}(\widehat{\psi},\widehat{H}).
\end{align}

Let us now consider a special case with $f_{\mathrm{SLD}}$. 
Since $4I^{f_{\mathrm{SLD}}}(\rho,H)=\mathcal{F}(\rho,H)$ holds for any state $\rho$, we get
\begin{widetext}
\begin{align}
    \overline{\mathcal{F}}(\widehat{\psi},\widehat{H})\geq \lim_{\epsilon\to0}\limsup_{m\to\infty}\frac{1}{m}\mathcal{F}^\epsilon(\psi_m,H_m)\geq \lim_{\epsilon\to0}\liminf_{m\to\infty}\frac{1}{m}\mathcal{F}^\epsilon(\psi_m,H_m)\geq \underline{\mathcal{F}}(\widehat{\psi},\widehat{H})\label{eq:pure_inequalities},
\end{align}
\end{widetext}
where we defined
\begin{align}
    \mathcal{F}^{\epsilon}(\psi,H)\coloneqq \inf_{\rho\in B^\epsilon(\psi)}\mathcal{F}(\rho,H)
\end{align}
These inequalities are interpreted as the asymptotic version of inequalities~\eqref{eq:max_min}.

A similar inequalities are known for the spectral sup- and inf-information rates. Let us first introduce the max-entropy $S_{\mathrm{max}}(\rho)\coloneqq \log \left(\mathrm{rank}\left(\rho\right)\right)$ and the min-entropy $S_{\mathrm{min}}(\rho)\coloneqq -\log\left(\|\rho\|_\infty\right)$,
where $\mathrm{rank}\left(\rho\right)$ and $\|\rho\|_\infty$ denote the rank of $\rho$ and the maximum eigenvalue of $\rho$, respectively.
These quantities provide upper and lower bounds for the von Neumann entropy $S(\rho)\coloneqq -\mathrm{Tr}(\rho\log\rho)$ as
\begin{align}
    S_{\max}(\rho)\geq S(\rho)\geq S_{\min}(\rho).\label{eq:smax_smin}
\end{align}
Defining the smooth max- and min-entropies $S_{\mathrm{max}}^\epsilon(\rho)\coloneqq \inf_{\sigma\in B^\epsilon(\rho)}S_{\mathrm{max}}(\sigma)$ and $S_{\mathrm{min}}^\epsilon(\rho)\coloneqq  \sup_{\sigma\in B^\epsilon(\rho)}S_{\mathrm{min}}(\sigma)$, the spectral sup- and inf-entropy rates $\overline{S}(\widehat{\rho})$ and $\underline{S}(\widehat{\rho})$ of a sequence of states $\widehat{\rho}=\{\rho_m\}_m$ are expressed as
\begin{align}
    \overline{S}(\widehat{\rho})&\coloneqq\lim_{\epsilon\to 0} \limsup_{m\to\infty}\frac{1}{m}S_{\mathrm{max}}^\epsilon(\rho_m),\\
    \underline{S}(\widehat{\rho})&\coloneqq\lim_{\epsilon\to 0} \liminf_{m\to\infty}\frac{1}{m}S_{\mathrm{min}}^\epsilon(\rho_m)
\end{align}
by using the smoothing method \cite{datta_smooth_2009,renner_security_2005}. 
It is known that 
\begin{align}
    \overline{S}(\widehat{\rho})\geq \limsup_{m\to\infty}\frac{1}{m}S(\rho_m)\geq \liminf_{m\to\infty}\frac{1}{m}S(\rho_m)\geq \underline{S}(\widehat{\rho}).\label{eq:inequality_entropy}
\end{align}
hold \cite{datta_min-_2009}. These inequalities can be understood as an asymptotic version of inequalities~\eqref{eq:smax_smin}. 

In the resource theory of entanglement, the spectral entropy rates play an essential role in the conversion theory of non-i.i.d. states. Precisely, for a sequence of bipartite pure states $\widehat{\psi}_{AB}=\{\psi_{AB,m}\}_m$, the entanglement cost and the distillable entanglement are given by the spectral sup- and inf-entropy rates for the sequence of reduced states $\widehat{\rho}\coloneqq \{\rho_{m}\}_m$, where $\rho_{m}\coloneqq \mathrm{Tr}_B(\psi_{AB,m})$, respectively \cite{hayashi_general_2003,bowen_asymptotic_2008}. Furthermore, the sup- and inf-rates of the von Neumann entropy for $\widehat{\rho}$ are asymptotic entanglement measures since the entanglement entropy of a pure state $S_{\mathrm{EE}}(\psi_{AB,m})\coloneqq S(\rho_{m})$ is an entanglement measure that is asymptotically continuous \cite{vidal_entanglement_2000,horodecki_limits_2000,horodecki_entanglement_2001, donald_uniqueness_2002,synak-radtke_asymptotic_2006} as shown by the Fannes inequality \cite{fannes_continuity_1973}. 

It is worth emphasizing that there is a crucial difference between Eqs.~\eqref{eq:pure_inequalities} and \eqref{eq:inequality_entropy} for the asymptotic rates of the quantum Fisher information $\mathcal{F}$ and the entanglement entropy $S_{\mathrm{EE}}$. In Eq.~\eqref{eq:pure_inequalities}, smoothing parameter $\epsilon$ must be included in since the quantum Fisher information $\mathcal{F}$ is not asymptotically continuous \cite{gour_measuring_2009}. On the other hand, in Eq.~\eqref{eq:inequality_entropy}, the smoothing technique is not required for the entanglement entropy $S_{\mathrm{EE}}$ since it is asymptotically continuous \cite{vidal_entanglement_2000,horodecki_limits_2000,horodecki_entanglement_2001, donald_uniqueness_2002,synak-radtke_asymptotic_2006}.
The correspondence is summarized in Table~\ref{tab:comparison}.

\begin{widetext}

\setlength{\tabcolsep}{4.5pt}
\renewcommand{\arraystretch}{1.5}
\begin{threeparttable}[htbp]
    \centering
    \caption{Summary of asymptotic resource measures in the resource theory of entanglement and the resource theory of asymmetry, which are based on entropies and the quantum Fisher information (QFI), respectively. }
    \begin{tabular}{|c||c| c|}\hline
    & Entropy&the quantum Fisher information (QFI)\\\hline\hline
    \multirow{4}{*}{Spectral rates} & sup-rate of smooth max-entropy &sup-rate of smooth max-QFI\tnote{$\mathsection$}\\ &$\overline{S}(\widehat{\rho})=\displaystyle\lim_{\epsilon\to 0}\limsup_{m\to\infty}\frac{1}{m}S_{\max}^\epsilon (\rho_{m})$& $\overline{\mathcal{F}}(\widehat{\psi},\widehat{H})=\displaystyle\lim_{\epsilon\to 0}\limsup_{m\to\infty}\frac{1}{m}\mathcal{F}_{\max}^\epsilon(\psi_{m},H_m)$\\\cline{2-3}
    & inf-rate of smooth min-entropy &inf-rate of smooth min-QFI\tnote{$\mathsection$}\\ &$\underline{S}(\widehat{\rho})=\displaystyle\lim_{\epsilon\to 0}\liminf_{m\to\infty}\frac{1}{m}S_{\min}^\epsilon (\rho_{m})$& $\underline{\mathcal{F}}(\widehat{\psi},\widehat{H})=\displaystyle\lim_{\epsilon\to 0}\liminf_{m\to\infty}\frac{1}{m}\mathcal{F}_{\min}^\epsilon(\psi_{m},H_m)$\\\hline
    \multirow{4}{*}{Asymptotic rates}& sup-rate of entanglement entropy\tnote{*} & sup-rate of smooth QFI\\
    &$\displaystyle\limsup_{m\to\infty}\frac{1}{m}S(\rho_m)$&$\displaystyle\lim_{\epsilon\to0}\limsup_{m\to\infty}\frac{1}{m}\mathcal{F}^\epsilon(\psi_m,H_m)$\\\cline{2-3}
    & inf-rate of entanglement entropy\tnote{*} & inf-rate of smooth QFI\\
    &$\displaystyle\liminf_{m\to\infty}\frac{1}{m}S(\rho_m)$&$\displaystyle\lim_{\epsilon\to0}\liminf_{m\to\infty}\frac{1}{m}\mathcal{F}^\epsilon(\psi_m,H_m)$\\\hline
    \end{tabular}
    \begin{tablenotes}
    \item[*]Smoothing technique is \textit{not} required.
    \item[$\mathsection$]Defined for a sequence of pure states with a finite period. 
    \end{tablenotes}
    \label{tab:comparison}
\end{threeparttable}

\end{widetext}

\section{Applications}\label{sec:applications}
To better understand our results established in the present paper, we here apply them to concrete examples. 

\subsection{Upperbound for distillable coherece}
Distillable coherence is particularly relevant for mixed states since it quantifies the number of pure resources that can be extracted from noisy resources. In \cite{marvian_coherence_2020}, the following sufficient condition for the distillable coherence to vanish is proven: 
\begin{align}
    [\Pi_\rho,H]=0\implies C_{\mathrm{dist}}\left(\widehat{\rho}_{\mathrm{iid}},\widehat{H}_{\mathrm{iid}}\right)=0,\label{eq:marvian_dist_iid_mixed}
\end{align}
where 
$\Pi_\rho$ is the projector to the support of $\rho$, $\widehat{\rho}_{\mathrm{iid}}=\{\rho^{\otimes m }\}_m$ and $\widehat{H}_{\mathrm{iid}}=\{H_{\mathrm{iid},m}\}_m$ for $H_{\mathrm{iid},m}\coloneqq \sum_{i=1}^m\mathbb{I}^{\otimes i-1}\otimes H\otimes \mathbb{I}^{\otimes m-i}$. However, despite its importance, no general formula is known for the distillable coherence of mixed states, even in the i.i.d. regime. 

We here derive a simple upper bound for the distillable coherence applicable to general sequence of states. From Theorem~\ref{thm:inequality}, the distillable coherence $C_{\mathrm{dist}}\left(\widehat{\rho},\widehat{H}\right)$ is upper bounded by $4I_-^f\left(\widehat{\rho},\widehat{H}\right)$. From the definition of smooth metric adjusted skew information inf-rate, it holds $\liminf_{m\to\infty}\frac{1}{m}I_-^f(\rho_m,H_m)\geq I^f\left(\widehat{\rho},\widehat{H}\right)$. Therefore, 
\begin{align}
    4\liminf_{m\to\infty}\frac{1}{m}I^f(\rho_m,H_m)\geq C_{\mathrm{dist}}\left(\widehat{\rho},\widehat{H}\right)\label{eq:upper_bound_nonsmooth}
\end{align}
holds for any $\left(\widehat{\rho},\widehat{H}\right)$.

To simplify the argument, let us now consider the i.i.d. case. Since the metric adjusted skew informations are additive for product states \cite{hansen_metric_2008}, we have
\begin{align}
    \liminf_{m\to\infty}\frac{1}{m}I^f(\rho^{\otimes m},H_{\mathrm{iid},m})=I^f(\rho,H),
\end{align}
implying that
\begin{align}
   4 I^f(\rho,H)\geq C_{\mathrm{dist}}\left(\widehat{\rho}_{\mathrm{iid}},\widehat{H}_{\mathrm{iid}}\right).\label{eq:upperbound_dist_iid_case}
\end{align}
Note that for $f_{\mathrm{SLD}}$, this bound is equivalent to a trivial inequality
\begin{align}
    C_{\mathrm{cost}}\left(\widehat{\rho}_{\mathrm{iid}},\widehat{H}_{\mathrm{iid}}\right)\geq C_{\mathrm{dist}}\left(\widehat{\rho}_{\mathrm{iid}},\widehat{H}_{\mathrm{iid}}\right)\label{eq:trivial_bound}
\end{align}
since $C_{\mathrm{cost}}\left(\widehat{\rho}_{\mathrm{iid}},\widehat{H}_{\mathrm{iid}}\right)=4I^{f_{\mathrm{SLD}}}(\rho,H)$ if $\rho$ has a period $2\pi$ \cite{marvian_operational_2022}. Our bound in Eq.~\eqref{eq:upperbound_dist_iid_case} gives a better bound in general since $I^{f_{\mathrm{SLD}}}(\rho,H)$ is the largest element in the family of metric adjusted skew information. 

As a concrete example, let us analyze an i.i.d. copies of qutrit systems with Hamiltonian $H=\sum_{n=0}^2n\ket{n}\bra{n}$. Consider a mixed state
\begin{align}
    \rho\coloneqq (1-q) \psi_1+q\psi_2\label{eq:example_mixed_state}
\end{align}
where $q\in(0,1)$ and $\psi_i\coloneqq \ket{\psi_i}\bra{\psi_i}$ for orthonormal states $\{\ket{\psi_i}\}_{i=1}^2$. From the definition, the metric adjusted skew information is explicitly calculated as
\begin{align}
    &I^f(\rho,H)\nonumber\\
    &=(1-q)\mathrm{Var}(\psi_1,H)+q\mathrm{Var}(\psi_2,H)\nonumber\\
    &\quad -\left(1-\frac{f(0)(1-2q)^2}{qf((1-q)/q)}\right)|\braket{\psi_1|H|\psi_2}|^2
\end{align}
for the mixed state in Eq.~\eqref{eq:example_mixed_state}. 

To compare the bounds in Eqs.~\eqref{eq:upperbound_dist_iid_case} and \eqref{eq:trivial_bound}, we calculate the metric adjusted skew information for
\begin{align}
    \ket{\psi_1}&\coloneqq \frac{1}{2}\ket{0}+\frac{1}{2}\ket{1}+\frac{1}{\sqrt{2}}\ket{2},\\
    \ket{\psi_2}&\coloneqq \frac{1}{\sqrt{2}}\ket{0}-\frac{1}{\sqrt{2}}\ket{1}
\end{align}
Note that the result in Eq.~\eqref{eq:marvian_dist_iid_mixed} is not applicable since $[\Pi_\rho,H]=[\psi_1+\psi_2,H]\neq 0$. For this example, the metric adjusted skew information for $f_{\mathrm{SLD}}$ and $f_{\mathrm{WYD},p}$ are evaluated as
\begin{align}
    &I^{f_{\mathrm{SLD}}}(\rho,H)=\frac{11-15q+8q^2}{16},\\
    &I^{f_{\mathrm{WYD},p}}(\rho,H)=\frac{11-7q}{16}\nonumber\\
    &-\frac{1}{8}\left(1-q\left(\left(\frac{1-q}{q}\right)^p-1\right)\left(\left(\frac{1-q}{q}\right)^{1-p}-1\right)\right)
\end{align}
for $q\in(0,1)$.
Since $I^{f_{\mathrm{WYD},p}}$ is a monotonically increasing continuous function of $p$ for $p\in(0,1/2)$ and $q\in(0,1)$, Eq.~\eqref{eq:upperbound_dist_iid_case} implies
\begin{align}
    &4\lim_{p\to 0^+}I^{f_{\mathrm{WYD},p}}(\rho,H)\biggl|_{q\neq 0,1}\nonumber\\
    &=\frac{9-7q}{4}\geq C_{\mathrm{dist}}\left(\widehat{\rho}_{\mathrm{iid}},\widehat{H}_{\mathrm{iid}}\right)\biggl|_{q\neq 0,1}\label{eq:limit_bound}
\end{align}
for $q\in(0,1)$.

The metric adjusted skew information for $f_{\mathrm{SLD}}$ and $f_{\mathrm{WYD},p}$, and the bound in Eq.~\eqref{eq:limit_bound} are plotted in Fig.~\ref{fig:comparison_skew_info}. Of particular interest are their behaviors near $q\approx 0$ and $q\approx 1$. When $q=0$, i.e., $\rho=\psi_1$, the distillable coherence is given by
\begin{align}
    C_{\mathrm{dist}}\left(\widehat{\rho}_{\mathrm{iid}},\widehat{H}_{\mathrm{iid}}\right)\biggl|_{q=0}=4I^{f_{\mathrm{SLD}}}(\psi_1,H)=\frac{11}{4},\label{eq:dist_0}
\end{align}
while when $q=1$, it holds
\begin{align}
    C_{\mathrm{dist}}\left(\widehat{\rho}_{\mathrm{iid}},\widehat{H}_{\mathrm{iid}}\right)\biggl|_{q=1}=4I^{f_{\mathrm{SLD}}}(\psi_2,H)=1.\label{eq:dist_1}
\end{align}
From Eqs.~\eqref{eq:limit_bound}, \eqref{eq:dist_0} and \eqref{eq:dist_1}, we find that the distillable coherence $C_{\mathrm{dist}}\left(\widehat{\rho}_{\mathrm{iid}},\widehat{H}_{\mathrm{iid}}\right)$ is discontinuous at $q=0,1$. 

\begin{figure}
    \centering
    \includegraphics[width=8cm]{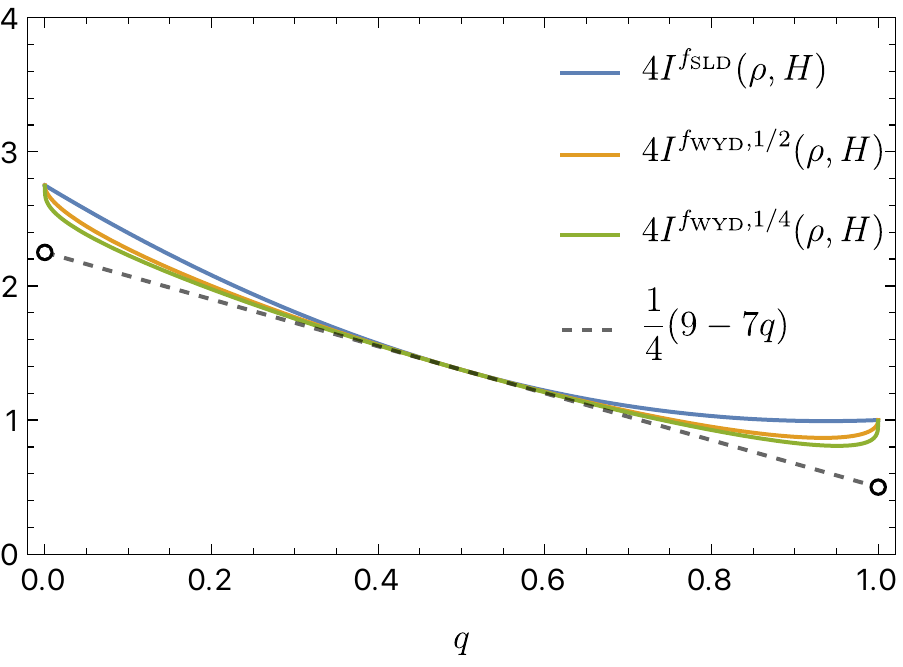}
    \caption{Comparison of upper bounds of the distillable coherence for an i.i.d. mixed state. $4I^{f_{\mathrm{SLD}}}$ corresponds to a trivial bound in Eq.~\eqref{eq:trivial_bound}, while $4I^{f_{\mathrm{WYD},p}}$ is a special case of our new bound in Eq.~\eqref{eq:upperbound_dist_iid_case}. The plot shows that the latter gives a tighter bound. }
    \label{fig:comparison_skew_info}
\end{figure}

We stress that the bounds in this subsection is derived by using Eq.~\eqref{eq:upper_bound_nonsmooth}, which can be calculated without using any smoothing technique. Nevertheless, the smoothing technique is essential to prove Eq.~\eqref{eq:upper_bound_nonsmooth} as we have argued in this paper. In the next subsection, we will revisit this point by analyzing an explicit example. 

\subsection{Necessity of smoothing}
Let us now consider a sequence of pure states $\widehat{\psi}=\{\psi_m\}_m$ for qutrit systems given by
\begin{align}
   \ket{\psi_m}&\coloneqq \sqrt{1-\epsilon_m}\ket{\phi_{\mathrm{coh}}}^{\otimes m}+\sqrt{\epsilon_m}\ket{2}^{\otimes m}
\end{align}
where $\ket{\phi_{\mathrm{coh}}}$ is defined in Eq.~\eqref{eq:cbit_definition} and $\epsilon_m\coloneqq 1/\sqrt{m}$. 
Note that $\widehat{\psi}\cova \widehat{\phi}_{\mathrm{coh}}(1)$ and $\widehat{\phi}_{\mathrm{coh}}(1)\cova \widehat{\psi}$ since $\lim_{m\to\infty}D\left(\psi_m,\phi_{\mathrm{coh}}^{\otimes m}\right)=0$. Therefore, we get
\begin{align}
    C_{\mathrm{cost}}(\widehat{\psi},\widehat{H})=C_{\mathrm{cost}}(\widehat{\psi},\widehat{H})=1.\label{eq:example_cost_dist}
\end{align}

The left-hand side of Eq.~\eqref{eq:upper_bound_nonsmooth} is evaluated as
\begin{align}
    4\lim_{m\to\infty}\frac{1}{m}I^f(\psi_m,H_m)=\infty
\end{align}
since
\begin{align}
    &I^f(\psi_m,H_m)=\mathrm{Var}(\psi_m,H_m)\nonumber\\
    &=\frac{\sqrt{m}}{4}(9m-8\sqrt{m}-1).
\end{align}
Therefore, Eq.~\eqref{eq:upper_bound_nonsmooth} does not provide a meaningful bound for distillable coherence in this example. This can be understood from the fact that the asymptotic rate of metric adjusted skew information is not a good asymptotic asymmetry measure due to the asymptotic discontinuity \cite{gour_measuring_2009,marvian_coherence_2020}.

In the present paper, we showed that the smoothing technique is essential to overcome this issue. From Lemma~\ref{lem:local_minima_MASI}, the smooth metric adjusted skew information rates are explicitly evaluated as
\begin{align}
    4I^f_\pm (\widehat{\psi},\widehat{H})=1.
\end{align}
From Theorem~\ref{thm:inequality}, this quantity gives a lower bound for the coherence cost and an upper bound for the distillable coherence, as is consistent with Eq.~\eqref{eq:example_cost_dist}.

\section{Conclusions}
In this paper, we investigated the properties of metric adjusted skew information particularly from the perspective of the resource theory of asymmetry. A family of metric adjusted skew information is induced from a family of monotone metrics, i.e., quantum Fisher information metrics in information geometry. They are known as asymmetry resource measure in the one-shot regime. However, their asymptotic rates are not valid asymptotic asymmetry monotone due to their asymptotic discontinuity. To find asymptotic asymmetry measures, we first introduced a family of the $\epsilon$-smooth metric adjusted skew information by using the smoothing technique. We then defined the sup- and inf-smooth metric adjusted skew information rates as its asymptotic rates in the limit of $\epsilon\to0$. We proved that these families of smooth metric adjusted skew information rates are valid asymptotic asymmetry measures. We further proved that the smooth metric adjusted skew information rates inherit the convexity of skew informations, which may be viewed as an extension of the Wigner-Yanase-Dyson conjecture \cite{wigner_information_1963,lieb_convex_1973} to the asymptotic regime.

By analyzing a general asymptotic behavior of the metric adjusted skew information for sequences of states that are close to i.i.d. pure states, we explicitly calculated the smooth metric adjusted skew information for an arbitrary sequence of i.i.d pure states. Combining this result with the asymptotic monotonicity of the smooth metric adjusted skew information rates, we related them to the key operational quantities in the asymptotic conversion theory in the resource theory of asymmetry. Concretely, we showed that the smooth metric adjusted skew information rates provide a lower bound of the coherence cost and an upper bound of the distillable coherence. As a corollary, we further proved inequalities relating the asymptotic rates of the quantum Fisher information to the spectral quantum Fisher information rates based on an information-spectrum approach for the quantum Fisher information \cite{yamaguchi_beyond_2022}. They have a structure similar to the inequalities for the von Neumann entropy rates and the spectral entropy rates. However, our analysis shows that smoothing parameters must be included in the rates of the quantum Fisher information since it has an asymptotic discontinuity. 

The main focus of this paper is to study the metric adjusted skew information in the resource theory of asymmetry. However, our results could also be useful in quantum thermodynamics and quantum estimation theory. In the resource theory of athermality, thermal operations are regarded as free. Here, thermal operations are operations that can be implemented by coupling a system to a thermal bath through an energy-conserving unitary evolution \cite{janzing_thermodynamic_2000,brandao_resource_2013,horodecki_fundamental_2013}. Since any thermal operation is covariant with respect to the time translation, Theorem~\ref{thm:smooth_masi_rates_valid_measures} implies that the smooth metric adjusted skew information rates $I^f_{\pm}$ are valid asymptotic resource measures in the resource theory of athermality. Although relevant quantities in quantum thermodynamics are typically introduced from entropies, new insights are obtained with the family of quantum Fisher information \cite{marvian_coherence_2020}. Therefore, it will be quite interesting to explore the role of the smooth metric adjusted skew information rates not only as asymmetry measures but also as athermality measures. On the other hand, the smooth metric adjusted skew information rates can also be helpful in studies on quantum estimation theory in the asymptotic regime. Indeed, the metric adjusted skew information is proportional to quantum Fisher information, a central quantifier in quantum estimation theory, for a unitary model. Further research in these directions is left for future work since it is out of the scope of the present paper.

\begin{acknowledgments}
KY acknowledges support from the JSPS Overseas Research Fellowships. 
HT acknowledges supports from JSPS Grants-in-Aid for Scientific Research (JP19K14610 and JP22H05250), JST PRESTO (JPMJPR2014), and JST MOONSHOT (JPMJMS2061).
\end{acknowledgments}

\bibliographystyle{apsrev4-1}

\bibliography{references}


\onecolumn
\newpage
\appendix
\section{Properties of the $\epsilon$-smooth skew informations}\label{app:properties_smooth_masi}
Let us first prove the monotonicity of the $\epsilon$-smooth skew informations. 
\begin{proposition}[Monotonicity of the $\epsilon$-smooth metric adjusted skew information]
Let $\epsilon$ be a parameter such that $\epsilon\in(0,1]$. Let $(\rho,H)$ and $(\sigma,H')$ denote sets of states and Hamiltonians. If $\rho$ is convertible to $\sigma$ by a covariant operation, i.e., $(\rho,H)\cov (\sigma,H')$, then it holds
\begin{align}
    I_{\epsilon}^f(\rho,H)\geq I_{\epsilon}^f(\sigma,H').
\end{align}
\end{proposition}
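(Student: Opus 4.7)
The plan is to reduce the smoothed monotonicity to the one-shot monotonicity of the (unsmoothed) metric adjusted skew information, using the data-processing inequality for the trace distance to move the smoothing ball across the channel. Concretely, let $\mathcal{E}$ be a covariant channel witnessing $(\rho,H)\cov(\sigma,H')$, so $\sigma=\mathcal{E}(\rho)$.

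The key observation is that $\mathcal{E}$ maps the $\epsilon$-ball around $\rho$ into the $\epsilon$-ball around $\sigma$: for any $\rho'\in B^{\epsilon}(\rho)$, contractivity of the trace distance under CPTP maps gives
\begin{align}
D\bigl(\mathcal{E}(\rho'),\sigma\bigr)=D\bigl(\mathcal{E}(\rho'),\mathcal{E}(\rho)\bigr)\leq D(\rho',\rho)\leq \epsilon,
\end{align}
so $\mathcal{E}(\rho')\in B^{\epsilon}(\sigma)$. Combining this with the one-shot monotonicity of $I^{f}$ under covariant operations (recalled in the main text), we obtain for every $\rho'\in B^{\epsilon}(\rho)$ the chain
\begin{align}
I^{f}(\rho',H)\;\geq\;I^{f}\bigl(\mathcal{E}(\rho'),H'\bigr)\;\geq\;\inf_{\tau\in B^{\epsilon}(\sigma)}I^{f}(\tau,H')\;=\;I_{\epsilon}^{f}(\sigma,H').
\end{align}
Taking the infimum over $\rho'\in B^{\epsilon}(\rho)$ on the left-hand side yields $I_{\epsilon}^{f}(\rho,H)\geq I_{\epsilon}^{f}(\sigma,H')$, as required.

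There is no real obstacle: the argument is the standard ``smoothing commutes with monotone processing'' template, and it works verbatim here because the two ingredients—contractivity of $D$ under CPTP maps and one-shot monotonicity of $I^{f}$ under covariant channels—are already in hand. The only mild care needed is making sure the infimum defining $I_{\epsilon}^{f}$ is over a nonempty set (which holds since $\rho\in B^{\epsilon}(\rho)$) and that one does not need the infimum to be attained, since taking $\inf$ on the left after the pointwise inequality suffices.
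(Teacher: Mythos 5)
Your proof is correct and follows essentially the same route as the paper's: both use the one-shot monotonicity of $I^f$ under covariant channels together with the contractivity of the trace distance to show that $\mathcal{E}$ maps $B^\epsilon(\rho)$ into $B^\epsilon(\sigma)$, and then pass to the infimum. No gaps.
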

\begin{proof}
The proof directly follows from the monotonicity of the metric adjusted skew informaiton and the trace distance. By using a covariant channel $\mathcal{E}$ such that $\mathcal{E}(\rho)=\sigma$, we get
\begin{align}
     I_\epsilon^f(\rho,H)&=\inf_{\xi\in B^\epsilon(\rho)}I_\epsilon^f(\xi,H)\\
    &\geq \inf_{\xi\in B^\epsilon(\rho)}I_\epsilon^f(\mathcal{E}(\xi),H')
\end{align}
from the monotonicity of metric adjusted skew informations.
Since the trace distance is contractive under a quantum channel, $\mathcal{E}(\xi)\in B^{\epsilon}(\mathcal{E}(\rho))$ holds for any $\xi\in B^\epsilon(\rho)$, implying that
\begin{align}
    \inf_{\xi\in B^\epsilon(\rho)}I_\epsilon^f(\mathcal{E}(\xi),H')\geq \inf_{\chi\in B^\epsilon(\sigma)}I_\epsilon^f(\chi,H').
\end{align}
Therefore, we get
\begin{align}
    I_\epsilon^f(\rho,H)\geq I_\epsilon^f(\sigma,H').
\end{align}

\end{proof}

We now prove the convexity of the $\epsilon$-smooth metric adjusted skew informations.
\begin{proposition}[Convexity of the $\epsilon$-smooth metric adjusted skew information]
Let $\epsilon$ be a parameter $\epsilon\in(0,1]$. For any states $\{\rho^{(k)}\}_k$ and a probability distribution $\{p_k\}_k$, it holds
\begin{align}
    \sum_k p_k I_\epsilon^f(\rho_k,H)\geq I_\epsilon^f\left(\sum_k p_k \rho_k,H\right).
\end{align}
\end{proposition}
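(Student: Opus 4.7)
The plan is to lift the convexity of the metric adjusted skew information itself (due to Hansen) to the smoothed version by a standard near-optimizer construction, using the joint convexity of the trace distance to control the smoothing parameter.

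First I would fix an arbitrary $\delta>0$ and, for each index $k$, invoke the definition of the infimum to pick a state $\sigma_k\in B^\epsilon(\rho_k)$ satisfying
\begin{align}
I^f(\sigma_k,H)\leq I_\epsilon^f(\rho_k,H)+\delta.
\end{align}
If the sum over $k$ is countably infinite rather than finite, I would instead demand $I^f(\sigma_k,H)\leq I_\epsilon^f(\rho_k,H)+\delta/2^k$ so that the error remains summable; the argument below is otherwise unchanged.

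Next, define the mixture $\sigma\coloneqq \sum_k p_k\sigma_k$ and $\rho\coloneqq \sum_k p_k\rho_k$. By the joint convexity of the trace distance,
\begin{align}
D(\sigma,\rho)\leq \sum_k p_k D(\sigma_k,\rho_k)\leq \sum_k p_k\epsilon=\epsilon,
\end{align}
so $\sigma\in B^\epsilon(\rho)$ and hence $\sigma$ is a candidate for the infimum defining $I_\epsilon^f(\rho,H)$. Applying convexity of the (non-smoothed) metric adjusted skew information stated earlier in the paper then gives
\begin{align}
I_\epsilon^f(\rho,H)\leq I^f(\sigma,H)\leq \sum_k p_k I^f(\sigma_k,H)\leq \sum_k p_k I_\epsilon^f(\rho_k,H)+\delta.
\end{align}
Letting $\delta\to 0^+$ yields the claim.

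There is no real obstacle here; the only subtlety worth flagging is that the infimum in the definition of $I_\epsilon^f$ need not be attained, which is why I approximate by $\delta$-optimizers rather than selecting exact minimizers. The joint (in fact, separate) convexity of $D$ in its arguments is what guarantees that taking a convex combination of near-optimal smoothings of the $\rho_k$ produces a valid smoothing of $\rho$, and this is the single ingredient beyond the convexity of $I^f$ that makes the argument go through.
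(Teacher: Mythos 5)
Your proof is correct and follows essentially the same route as the paper's: both arguments reduce the claim to the convexity of $I^f$ itself together with the fact that convex combinations of states in the $\epsilon$-balls $B^\epsilon(\rho_k)$ land in $B^\epsilon\left(\sum_k p_k\rho_k\right)$, which is the convexity of the trace distance. Your explicit use of $\delta$-optimizers is just a more careful rendering of the paper's step of exchanging the weighted sum with the infimum over independent choices of the $\sigma_k$.
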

\begin{proof}
By using the convexity of the metric adjusted skew information
\begin{align}
    \sum_k p_kI^f(\rho_k,H)&\geq I^f\left(\sum_k p_k\rho_k,H\right),
\end{align}
we get
\begin{align}
    \sum_{k}p_k I_{\epsilon}^f(\rho_k,H)&=\sum_{k}p_k\inf_{\sigma_k\in B^\epsilon(\rho_k)}I^f(\sigma_k,H)\\
    &=\inf_{\forall k,\,\sigma_k\in B^\epsilon(\rho_k)}\sum_{k}p_kI^f(\sigma_k,H)\\
    &\geq \inf_{\forall k,\,\sigma_k\in B^\epsilon(\rho_k)}I^f\left(\sum_{k}p_k\sigma_k,H\right)\\
    &\geq \inf_{\sigma\in B^\epsilon(\sum_{k}p_k\rho_k)}I^f(\sigma,H)\\
    &=I_\epsilon^f\left(\sum_k p_k \rho_k,H\right).
\end{align}
In the last inequality, we have used 
\begin{align}
    \forall k,\quad  \sigma_k\in B^\epsilon (\rho_k)\implies \sum_{k}p_k\sigma_k\in B^\epsilon\left( \sum_kp_k\rho_k\right),
\end{align}
which follows from the convexity of the trace distance. See, e.g., \cite{nielsen_quantum_2010} for proof of the convexity of the trace distance.
\end{proof}

\section{Properties of smooth metric adjusted skew information rates}\label{app:masi_rate}

Let us first prove the asymptotic monotonicity of the smooth metric adjusted skew information rates. 
\begin{proposition}[Monotonicity of the smooth metric adjusted skew information rates]
Let $(\widehat{\rho},\widehat{H})$ and $(\widehat{\sigma},\widehat{H}')$ be sequences of states and Hamiltonians. If $(\widehat{\rho},\widehat{H})\cova (\widehat{\sigma},\widehat{H}')$, then it holds
\begin{align}
    I_{\pm}^f(\widehat{\rho},\widehat{H})\geq I_{\pm}^f(\widehat{\sigma},\widehat{H'}).
\end{align}
\end{proposition}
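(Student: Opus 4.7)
The plan is to lift the one-shot monotonicity of $I_\epsilon^f$ (established in the preceding proposition) to the asymptotic rates by combining the contractivity of the trace distance under quantum channels with a careful bookkeeping of the smoothing parameter. I will prove the inequality for $I_+^f$; the argument for $I_-^f$ is identical after replacing $\limsup$ with $\liminf$.

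First I would unpack the definition of $(\widehat{\rho},\widehat{H})\cova(\widehat{\sigma},\widehat{H}')$. For any fixed $\delta>0$, there exists a sequence of covariant operations $\{\mathcal{E}_m\}_m$ with $\limsup_{m\to\infty} D(\mathcal{E}_m(\rho_m),\sigma_m)\le\delta$, so for any $\delta'>\delta$ there is $M=M_{\delta'}$ such that $D(\mathcal{E}_m(\rho_m),\sigma_m)\le\delta'$ whenever $m\ge M$. Next, for any state $\xi\in B^{\epsilon_0}(\rho_m)$, the contractivity of the trace distance under $\mathcal{E}_m$ together with the triangle inequality yields $D(\mathcal{E}_m(\xi),\sigma_m)\le\epsilon_0+\delta'$, i.e., $\mathcal{E}_m(\xi)\in B^{\epsilon_0+\delta'}(\sigma_m)$ for $m\ge M$.

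The crucial step is to combine this inclusion with the one-shot monotonicity of the (unsmoothed) metric adjusted skew information, namely $I^f(\mathcal{E}_m(\xi),H_m')\le I^f(\xi,H_m)$. Taking the infimum over $\xi\in B^{\epsilon_0}(\rho_m)$ on both sides and noting that the left-hand side is bounded below by $I_{\epsilon_0+\delta'}^f(\sigma_m,H_m')$ gives, for all $m\ge M$,
\begin{align}
I_{\epsilon_0+\delta'}^f(\sigma_m,H_m')\le I_{\epsilon_0}^f(\rho_m,H_m).
\end{align}
Setting $\eta\coloneqq\epsilon_0+\delta'$, dividing by $m$, and taking $\limsup_{m\to\infty}$ preserves the inequality (the truncation at $m\ge M$ does not affect the $\limsup$).

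Finally, I would take the limits in the correct order. Holding $\eta>0$ fixed and sending $\epsilon_0\to 0^+$ (which forces $\delta'\to \eta$ and lets us pick $\delta<\delta'<\eta$) gives
\begin{align}
\limsup_{m\to\infty}\frac{1}{m}I_\eta^f(\sigma_m,H_m')\le\lim_{\epsilon_0\to 0^+}\limsup_{m\to\infty}\frac{1}{m}I_{\epsilon_0}^f(\rho_m,H_m)=I_+^f(\widehat{\rho},\widehat{H}).
\end{align}
Sending $\eta\to 0^+$ on the left then yields $I_+^f(\widehat{\sigma},\widehat{H}')\le I_+^f(\widehat{\rho},\widehat{H})$.

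The main obstacle is the correct ordering of the three limits, since $\cova$ is defined through a $\limsup$ of the approximation error rather than a uniform bound, and the smoothing parameter on the target side has to absorb both the approximation error $\delta'$ and the source-side smoothing $\epsilon_0$. Once this bookkeeping is handled by keeping $\eta=\epsilon_0+\delta'$ as an auxiliary parameter and exploiting the monotonicity of $I_\epsilon^f$ in $\epsilon$, the rest reduces to the one-shot monotonicity of $I^f$ together with the contractivity of the trace distance under the channels $\mathcal{E}_m$.
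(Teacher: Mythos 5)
Your proposal is correct and follows essentially the same route as the paper: contractivity of the trace distance plus the triangle inequality to show the channel maps the source $\epsilon$-ball into an enlarged ball around $\sigma_m$, one-shot monotonicity of $I^f$ under the infimum, and then the limits in the same order (the paper simply uses the coarser bookkeeping $\epsilon\mapsto 2\epsilon$ where you track $\epsilon_0+\delta'$). No substantive difference.
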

\begin{proof}

Since $(\widehat{\rho},\widehat{H})\cova (\widehat{\sigma},\widehat{H}')$, for any $\epsilon>0$, there exists a sequence of covariant channels $\{\mathcal{E}_m\}_m$ such that 
\begin{align}
    D\left(\mathcal{E}_m(\rho_m),\sigma_m\right)<\epsilon
\end{align}
holds for all sufficiently large $m$. 
By using the triangle inequality, we have
\begin{align}
    \forall \chi_m\in B^\epsilon\left(\mathcal{E}_m(\rho_m)\right),\quad D\left(\chi_m,\sigma_m\right)\leq D\left(\chi_m,\mathcal{E}(\rho_m)\right)+D\left(\mathcal{E}_m(\rho_m),\sigma_m\right)<2\epsilon
\end{align}
for all sufficiently large $m$.
Therefore, we get
\begin{align}
    I_{\epsilon}^f\left(\rho_m,H_m\right)&\geq I_{\epsilon}^f\left(\mathcal{E}_m(\rho_m),H_m'\right)\\
    &\geq I_{2\epsilon}^f\left(\sigma_m,H_m'\right)
\end{align}
and hence
\begin{align}
    \limsup_{m\to\infty}\frac{1}{m}I_{\epsilon}^f\left(\rho_m,H_m\right)&\geq \limsup_{m\to\infty}\frac{1}{m}I_{2\epsilon}^f\left(\sigma_m,H_m'\right),\\
    \liminf_{m\to\infty}\frac{1}{m}I_{\epsilon}^f\left(\rho_m,H_m\right)&\geq \liminf_{m\to\infty}\frac{1}{m}I_{2\epsilon}^f\left(\sigma_m,H_m'\right)
\end{align}
holds for any $\epsilon\in(0,1]$. In the limit of $\epsilon\to 0$, it is proven
\begin{align}
    I_+^f(\widehat{\rho},\widehat{H})&=\lim_{\epsilon\to0^+}\limsup_{m\to\infty}\frac{1}{m}I_{\epsilon}^f\left(\rho_m,H_m\right)\\
    &\geq \lim_{\epsilon\to0^+}\limsup_{m\to\infty}\frac{1}{m}I_{2\epsilon}^f\left(\sigma_m,H_m'\right)\\
    &=I_+^f(\widehat{\sigma},\widehat{H'}),
\end{align}
and similarly
\begin{align}
     I_-^f(\widehat{\rho},\widehat{H})\geq I_-^f(\widehat{\sigma},\widehat{H'}).
\end{align}

\end{proof}

Let us now prove the convexity of the smooth metric adjusted skew information rates. 
\begin{proposition}[Convexity of the smooth metric adjusted skew information rates]
Let $\{p_k\}_k$ be a probability distribution. We denote $\widehat{\rho^{(k)}}$ a sequences of states for each $k$. For a sequence of states $\sum_{k=1}^Np_k \widehat{\rho^{(k)}}$ given by $\{\sum_{k=1}^Np_k \rho^{(k)}_m\}_m$, it holds
\begin{align}
    \sum_{k=1}^Np_kI_{\pm}^f\left(\widehat{\rho}^{(k)},\widehat{H}\right)\geq I_\pm^f\left(\sum_{k=1}^Np_k\widehat{\rho}^{(k)},\widehat{H}\right)
\end{align}
\end{proposition}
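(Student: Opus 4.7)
\emph{Plan.} My plan is to promote the one-shot convexity of $I_\epsilon^f$ (established in Appendix~\ref{app:properties_smooth_masi}) to the asymptotic setting by passing through the two limits that define $I_\pm^f$. The starting inequality, valid for every $m$ and every $\epsilon\in(0,1]$, is
\[
\tfrac{1}{m}I_\epsilon^f\!\Bigl(\sum_k p_k\rho_m^{(k)},H_m\Bigr)\;\le\;\sum_k p_k\,\tfrac{1}{m}I_\epsilon^f\!\left(\rho_m^{(k)},H_m\right),
\]
and the task is to carry it through $\limsup_m$ (respectively $\liminf_m$) and then $\epsilon\to 0^+$.

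For the sup-rate $I_+^f$ I expect a clean argument. Subadditivity of $\limsup$ for finite sums gives, at each fixed $\epsilon$,
\[
\limsup_m\tfrac{1}{m}I_\epsilon^f\!\Bigl(\sum_k p_k\rho_m^{(k)},H_m\Bigr)\le\sum_k p_k\,\limsup_m\tfrac{1}{m}I_\epsilon^f\!\left(\rho_m^{(k)},H_m\right).
\]
Each side is monotone non-increasing in $\epsilon$, and the right-hand side is a finite convex combination of such monotone quantities, so $\lim_{\epsilon\to 0^+}$ commutes with the finite sum there. This yields $I_+^f(\sum_k p_k\widehat\rho^{(k)},\widehat H)\le\sum_k p_k I_+^f(\widehat\rho^{(k)},\widehat H)$, as desired.

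The $I_-^f$ direction will be the main obstacle, since $\liminf$ is only \emph{super}additive rather than subadditive. Passing $\liminf_m$ through the one-shot inequality yields
\[
\liminf_m\tfrac{1}{m}I_\epsilon^f\!\Bigl(\sum_k p_k\rho_m^{(k)},H_m\Bigr)\le\liminf_m\sum_k p_k\,\tfrac{1}{m}I_\epsilon^f(\rho_m^{(k)},H_m),
\]
but nothing automatically bounds the right-hand side by $\sum_k p_k\liminf_m\tfrac{1}{m}I_\epsilon^f(\rho_m^{(k)},H_m)$. To bridge the gap I would extract, for each fixed $\epsilon$, a subsequence $\{m_j\}$ attaining the $\liminf$ on the left, refine by a finite ($N$-step) diagonal extraction so that every component $\tfrac{1}{m_j}I_\epsilon^f(\rho_{m_j}^{(k)},H_{m_j})$ converges, and pick near-optimal $\epsilon$-close states $\sigma_{m_j}^{(k)}$ whose convex combination $\sum_k p_k\sigma_{m_j}^{(k)}$ lies in the $\epsilon$-ball around the mixture (by convexity of the trace distance) and tightens the one-shot bound. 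The delicate step---and the one I expect to be hardest---is forcing each component subsequential limit down to the individual $\liminf_m$ along a common subsequence, uniformly enough in $\epsilon$ that the final $\lim_{\epsilon\to 0^+}$ still commutes with the finite sum on the right; if that uniformity fails, one likely needs an auxiliary smoothing step with a $k$-dependent parameter, exchanging a slight loss in $\epsilon$ for control of the sum.
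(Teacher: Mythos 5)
Your argument for the sup-rate $I_+^f$ is correct and is essentially the paper's own proof of that half: the paper writes the key step as an equality, $\sum_k p_k\lim_{\epsilon\to0}\limsup_m(\cdot)=\lim_{\epsilon\to0}\limsup_m\sum_k p_k(\cdot)$, justified only by finiteness of the sum, whereas you correctly note that merely subadditivity of $\limsup$ is available --- but that inequality points in the direction needed, so the conclusion for $I_+^f$ stands either way, and your version is the more careful one.

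For $I_-^f$ you have correctly diagnosed the obstruction, but the repair you sketch cannot succeed. The inequality you would need,
\begin{align}
\sum_k p_k\liminf_{m\to\infty} a_m^{(k)}\;\geq\;\liminf_{m\to\infty} b_m\quad\text{whenever}\quad b_m\leq\sum_k p_k a_m^{(k)}\ \ \forall m,
\end{align}
is false for general real sequences: take $N=2$, $p_1=p_2=1/2$, $a_m^{(1)}=1+(-1)^m$, $a_m^{(2)}=1-(-1)^m$, and $b_m=\sum_k p_k a_m^{(k)}=1$; the left-hand side is $0$ while the right-hand side is $1$. Consequently no subsequence or diagonal extraction can close the gap starting from the one-shot convexity alone --- along any common subsequence on which one component attains its $\liminf$, the other components need not attain theirs, and this is the substance of the counterexample rather than a technicality. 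A proof of the inf-rate statement must therefore exploit structural properties of $I_\epsilon^f$ beyond the pointwise bound, and it is not obvious that one exists. You should also be aware that the paper's own treatment of this case is the single sentence ``in the same way,'' resting on the claim that exchanging a finite sum with $\lim_{\epsilon\to0}\liminf_{m\to\infty}$ is automatic; that is precisely the superadditivity issue you identified, so your proposal is no less complete than the published argument on this point --- but neither constitutes a proof of the $I_-^f$ case.
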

\begin{proof}
By using the convexity of smooth metric adjusted skew informations, we immediately get
\begin{align}
    \sum_{k=1}^Np_kI_{+}^f\left(\widehat{\rho}^{(k)},\widehat{H}\right)&=\sum_{k=1}^Np_k\lim_{\epsilon\to 0}\limsup_{m\to\infty}\frac{1}{m}I^f_\epsilon\left(\rho_m^{(k)},H_m\right)\\
    &= \lim_{\epsilon\to 0}\limsup_{m\to\infty}\frac{1}{m}\sum_{k=1}^Np_kI^f_\epsilon\left(\rho_m^{(k)},H_m\right)\\
    &\geq \lim_{\epsilon\to 0}\limsup_{m\to\infty}\frac{1}{m}I^f_\epsilon\left(\sum_{k=1}^Np_k\rho_m^{(k)},H_m\right)\\
    &=I_+^f\left(\sum_{k=1}^Np_k\widehat{\rho}^{(k)},\widehat{H}\right).
\end{align}
It should be noted that this proof is valid only if we can exchange the order of the sum $\sum_{k}$ and the limits $\lim_{\epsilon\to 0}\limsup_{m\to\infty}$. When the summation $\sum_{k}$ is taken over a finite set, this condition is always satisfied. 

In the same way, we also get $\sum_{k=1}^Np_kI_{-}^f\left(\widehat{\rho}^{(k)},\widehat{H}\right)\geq I_-^f\left(\sum_{k=1}^Np_k\widehat{\rho}^{(k)},\widehat{H}\right)$. 
\end{proof}

Combining the convexity of the smooth metric adjusted skew information rates and Theorem~\ref{thm:inequality}, we can derive the following simple upper bound for the distillable coherence:
\begin{align}
    \sum_{k}p_kI_{\pm}^f\left(\widehat{\rho}_k,\widehat{H}\right)\geq I_\pm^f\left(\sum_{k}p_k\widehat{\rho}^{(k)},\widehat{H}\right)\geq  C_{\mathrm{dist}}\left(\sum_{k}p_k\widehat{\rho}^{(k)},\widehat{H}\right).
\end{align}
As a corollary, we also get upper bounds that can be calculated without using the smoothing technique:
\begin{align}
    \liminf_{m\to\infty}\frac{1}{m}\sum_k p_kI^f\left(\rho^{(k)}_m,H_m\right)\geq \liminf_{m\to\infty}\frac{1}{m}I^f\left(\sum_k p_k\rho^{(k)}_m,H_m\right)\geq C_{\mathrm{dist}}\left(\sum_{k}p_k\widehat{\rho}^{(k)},\widehat{H}\right).
\end{align}

\section{Smooth metric adjusted skew information rates in the i.i.d. regime}\label{app:smooth_masi_iid}
We here explicitly calculate the smooth metric adjusted skew information rates in the i.i.d. regime. For this purpose, we here prove
Lemma~\ref{lem:local_minima_MASI}:
\begin{lemma*}[Restatement of Lemma \ref{lem:local_minima_MASI}]
Let $\psi$ be a pure state with period $2\pi$ with a Hamiltonian $H$. Assume that the absolute third moment is finite, i.e.,  $\braket{\psi||H|^3|\psi}<\infty$. 
Fix $\epsilon$ be a sufficiently small real parameter. Let  $\widehat{\rho}=\{\rho_m\}_m$ be a sequence of states such that $\rho_m\in B^\epsilon(\psi^{\otimes \ceil{Rm}})$ for all sufficiently large $m$ with a real parameter $R>0$, where $B^{\epsilon}(\rho)$ is the $\epsilon$-ball in the state space defined by $B^{\epsilon}(\rho)\coloneqq \{\sigma: \text{ states}\mid D(\rho,\sigma)\leq \epsilon\}$. Then there exists a real function $\delta^f(\epsilon)$ of $\epsilon$ such that $\lim_{\epsilon\to 0}\delta^f(\epsilon)=0$ and
\begin{align}
    &I^f\left(\rho_m,H_{\mathrm{iid},\ceil{mR}}\right)\geq I^f\left(\psi^{\otimes \ceil{Rm}},H_{\mathrm{iid},\ceil{mR}}\right)-m \delta^f(\epsilon)+o(m) \quad \quad (m\to\infty),
\end{align}
where $H_{\mathrm{iid},k}\coloneqq \sum_{i=1}^k\mathbb{I}^{\otimes i-1}\otimes H\otimes \mathbb{I}^{\otimes k-i}$.
\end{lemma*}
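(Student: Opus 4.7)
The plan is to exploit that $\rho_m$ is near the pure product state $\psi^{\otimes \ceil{Rm}}$ and use central-limit-theorem-type concentration to control the sensitivity of $I^f$. A key observation is that $I^f(\rho, H)$ is invariant under shifts $H \to H + c\mathbb{I}$, since the coefficient $(\lambda_i - \lambda_j)^2$ in Eq.~\eqref{eq:skew_info_def} vanishes on the diagonal. I therefore work throughout with the centered operator $\Delta H := H_{\mathrm{iid},\ceil{mR}} - \ceil{mR}\langle H\rangle_\psi\,\mathbb{I}$, for which $\langle\Delta H\rangle_{\psi^{\otimes k}} = 0$ and $\langle(\Delta H)^2\rangle_{\psi^{\otimes k}} = kv$ with $k := \ceil{Rm}$ and $v := \mathrm{Var}(\psi, H)$. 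This centering is essential: it converts what would be $O(k^2)$ estimates into $O(k)$ ones, matching the $-m\,\delta^f(\epsilon)$ scale claimed by the lemma.

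First I extract structure from $D(\rho_m, \psi^{\otimes k}) \leq \epsilon$. Since $\psi^{\otimes k}$ is pure, $\langle\psi^{\otimes k}|\rho_m|\psi^{\otimes k}\rangle \geq 1 - 2\epsilon$, so the top eigenvalue $\lambda_0$ of $\rho_m$ satisfies $\lambda_0 \geq 1 - 2\epsilon$ and the top eigenvector admits the decomposition $|\phi_0\rangle = \sqrt{1-\delta}|\psi^{\otimes k}\rangle + \sqrt{\delta}|\chi\rangle$ with $\chi \perp \psi^{\otimes k}$ and $\delta = O(\epsilon)$. Expanding $\mathrm{Var}(\phi_0, \Delta H)$ in $\delta$, the only potentially-negative leading-order correction is $2\sqrt{\delta(1-\delta)}\,\mathrm{Re}\langle\psi^{\otimes k}|(\Delta H)^2|\chi\rangle$; using $\chi \perp \psi^{\otimes k}$, this is bounded by $\sqrt{\langle(\Delta H)^4\rangle_{\psi^{\otimes k}} - (kv)^2}$. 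The i.i.d.\ structure $\Delta H = \sum_i (\Delta H)_i$ together with $\langle\Delta H\rangle_\psi = 0$ kills every term in the fourth-moment expansion having a singly-occurring site, leaving only partitions $(4)$ and $(2,2)$: $\langle(\Delta H)^4\rangle_{\psi^{\otimes k}} = k\mu_4 + 3k(k-1)v^2 = (kv)^2 + O(k^2)$. Hence $\mathrm{Var}(\phi_0, \Delta H) \geq kv - O(\sqrt{\epsilon})\,k$.

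Next, using Eq.~\eqref{eq:skew_info_def} and the identity $\lambda_j f(\lambda_0/\lambda_j) = \lambda_0 f(\lambda_j/\lambda_0)$, the $(0, j)$ coefficient can be written as $(\lambda_0/f(0))\,g_f(\lambda_j/\lambda_0)$ with $g_f(s) := (1-s)^2 f(0)/f(s)$ satisfying $g_f(0) = 1$ and $g_f(s) \geq 1 - C_f\,s$ for small $s$ (with $C_f = 2 + f'(0)/f(0)$). Dropping the nonnegative $(i, j \geq 1)$ contributions and bounding $\sum_{j \geq 1} \lambda_j|\langle\phi_0|\Delta H|\phi_j\rangle|^2 \leq 2\epsilon\,V$ with $V := \langle\phi_0|(\Delta H)^2|\phi_0\rangle$, I obtain $I^f(\rho_m, \Delta H) \geq \lambda_0\,\mathrm{Var}(\phi_0, \Delta H) - 2C_f\epsilon\,V$. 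In the benign regime $V = O(k)$ this gives $I^f \geq kv - O(\sqrt{\epsilon})\,k$; in the adversarial regime $V \gg k$ (which requires $\chi$ to have anomalously large variance under $\Delta H$), a direct expansion of $\langle\Delta H\rangle_{\phi_0}^2$ shows $\langle\Delta H\rangle_{\phi_0}^2 \leq O(\epsilon)\,V$, hence $\mathrm{Var}(\phi_0, \Delta H) \geq (1-O(\epsilon))\,V$ and $I^f \geq (1 - O(\epsilon))\,V \gg kv$. In both cases, $I^f(\rho_m, H_{\mathrm{iid},\ceil{mR}}) \geq I^f(\psi^{\otimes k}, H_{\mathrm{iid},\ceil{mR}}) - m\,\delta^f(\epsilon) + o(m)$ with $\delta^f(\epsilon) = O(\sqrt{\epsilon})$, as claimed.

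The main obstacle will be establishing the $O(k)$ bound on $\sqrt{\langle(\Delta H)^4\rangle_{\psi^{\otimes k}} - (kv)^2}$ from only the third-absolute-moment hypothesis $\braket{\psi||H|^3|\psi} < \infty$, since the naive expansion above invokes the fourth central moment $\mu_4 = \langle\psi|(H - \langle H\rangle_\psi)^4|\psi\rangle$, which need not be finite under only a third-moment assumption. A Berry-Esseen-style truncation of the single-site spectral distribution, combined with careful tracking of errors as $k \to \infty$, should extract the required $O(k)$ scaling from only the third moment. An auxiliary subtlety is that the constant $C_f$ depends on $f'(0)/f(0)$, so care must be taken that $\delta^f(\epsilon)$ remains a sensible function of $\epsilon$ alone that vanishes as $\epsilon \to 0$, uniformly in $m$ and across the class of regular standard monotone functions $f$ considered.
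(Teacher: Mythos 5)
Your overall architecture up to the variance step matches the paper's: isolate the top eigenvector $\phi_0$ of $\rho_m$, keep only the $(0,j)$ terms of Eq.~\eqref{eq:skew_info_def}, and control the $f$-dependent coefficient so that $I^f(\rho_m,H)\gtrsim \mathrm{Var}(\phi_0,H)$ up to a factor tending to $1$ as $\epsilon\to 0$ (this is exactly Lemma~\ref{lem:masi_lowerbound}). The genuine gap is the one you flag yourself and then defer: your lower bound on $\mathrm{Var}(\phi_0,\Delta H)$ rests on the Cauchy--Schwarz estimate $|\braket{\psi^{\otimes k}|(\Delta H)^2|\chi}|\le \|Q(\Delta H)^2\ket{\psi^{\otimes k}}\|=\sqrt{\braket{(\Delta H)^4}_{\psi^{\otimes k}}-(kv)^2}$, which is $O(k)$ only when the single-site fourth central moment $\mu_4$ is finite. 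The lemma assumes only $\braket{\psi||H|^3|\psi}<\infty$, and when $\mu_4=\infty$ the vector $(\Delta H)^2\ket{\psi^{\otimes k}}$ is not even normalizable, so the estimate collapses entirely rather than merely degrading. The proposed repair (``Berry--Esseen-style truncation of the single-site spectral distribution'') does not obviously address this, because the failure is at the level of an operator acting on the state, not at the level of approximating a classical distribution; any truncation of $H$ changes the quantity $\mathrm{Var}(\phi_0,\Delta H)$ you are trying to bound, and you would still need to control the variance carried by the untruncated tail using only third-moment information. This is a substantial missing piece, not a routine technicality.

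The paper circumvents fourth moments entirely by a different second step: it passes from $\mathrm{Var}(\Phi_m,H_{\mathrm{iid},k})$ to the variance of the classical energy distribution $p_{\Phi_m}$, notes that $p_{\Phi_m}$ is close in total variation to $p_{\psi^{\otimes k}}$, invokes the Poisson local limit theorem for i.i.d.\ integer-valued energy distributions (Eq.~\eqref{eq:convergence_TP}; this is precisely where the third absolute moment is the needed hypothesis), and then proves a separate robustness statement (Lemma~\ref{lem:variance_lowerbound}): any distribution within $\epsilon$ in total variation of a Poisson of mean $m\lambda$ has variance at least $m\lambda-\gamma_\lambda(\epsilon)m$. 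The truncation there is performed on the \emph{known} reference Poisson (a window of width $\alpha_\epsilon\sqrt{m}$ around the mean, controlled by Chebyshev and Gaussian tail estimates), so no moment of the perturbed distribution beyond what the TV bound supplies is ever needed. A secondary, more easily repaired defect in your write-up: the linearization $g_f(s)\ge 1-C_f s$ with $C_f=2+f'(0)/f(0)$ fails for standard monotone functions with $f'(0^+)=\infty$ (e.g.\ the Wigner--Yanase function $f(x)=(\sqrt{x}+1)^2/4$); you should instead use only the monotonicity of $f$, bounding the coefficient by $f(0)(1-s_{\max})^2/f(s_{\max})$ with $s_{\max}\le 2\epsilon/(1-2\epsilon)$, as the paper does.
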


To prove this lemma, let us first derive a general behavior of metric adjusted skew informations for states close to a pure state. The following lemma is an extension of claims in Corollary~1 in \cite{marvian_coherence_2020}.
\begin{lemma}\label{lem:masi_lowerbound}
Let $\Phi$ denote the eigenstate with the largest eigenvalue of an arbitrary state $\rho$. For a pure state $\Psi$, the infidelity of $\rho$ and $\Psi$ is defined by $\delta\coloneqq 1-  \Braket{\Psi|\rho|\Psi}$. Then it holds
\begin{align}
    |\braket{\Phi|\Psi}|\geq 1-2\delta\label{eq:infidelity_largest_ev}.
\end{align}
Furthermore, when $\delta<1/2$, the metric adjusted skew informations $I^f(\rho,H)$ are lower bounded as 
\begin{align}
    I^f(\rho,H)\geq \frac{f(0)}{f\left(\frac{\delta}{1-\delta}\right)}(1-2\delta)^2\mathrm{Var}(\Phi,H).\label{eq:masi_lowerbound_infidelity}
\end{align}
\end{lemma}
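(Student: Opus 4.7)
The plan is to exploit the observation that high fidelity $1-\delta$ between $\rho$ and the pure state $\Psi$ forces the largest eigenvalue $\lambda_1$ of $\rho$ to lie near $1$, and thus the spectral gap between $\lambda_1$ and every other eigenvalue to be near $1$. Both claims then reduce to elementary manipulations of the spectral formula~\eqref{eq:skew_info_def}.

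For Eq.~\eqref{eq:infidelity_largest_ev}, I would write $\rho=\sum_i\lambda_i\ket{i}\bra{i}$ in its eigenbasis with $\ket{1}=\ket{\Phi}$ and first note
\begin{align}
1-\delta=\braket{\Psi|\rho|\Psi}=\sum_i\lambda_i|\braket{i|\Psi}|^2\leq\lambda_1.
\end{align}
Next, decomposing $\ket{\Psi}=\alpha\ket{\Phi}+\beta\ket{\Phi^\perp}$ along $\ket{\Phi}$ and its orthogonal complement, the cross term vanishes because $\ket{\Phi}$ is an eigenvector of $\rho$, so
\begin{align}
1-\delta=|\alpha|^2\lambda_1+|\beta|^2\braket{\Phi^\perp|\rho|\Phi^\perp}\leq|\alpha|^2\lambda_1+(1-|\alpha|^2)(1-\lambda_1),
\end{align}
where I used $\braket{\Phi^\perp|\rho|\Phi^\perp}\leq 1-\lambda_1$. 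Solving for $|\alpha|^2=|\braket{\Phi|\Psi}|^2$ and inserting $\lambda_1\geq 1-\delta$ gives $|\braket{\Phi|\Psi}|^2\geq(1-2\delta)/(1-\delta)$, from which $|\braket{\Phi|\Psi}|\geq 1-2\delta$ follows at once by the elementary inequality $\sqrt{(1-2\delta)/(1-\delta)}\geq 1-2\delta$ for $\delta\in[0,1/2]$; outside this range the claim is trivial since $1-2\delta\leq 0$.

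For Eq.~\eqref{eq:masi_lowerbound_infidelity}, I would start from the explicit spectral formula~\eqref{eq:skew_info_def} and retain only the non-negative pairs with $i=1$ or $j=1$. Using the identity $\lambda_j f(\lambda_1/\lambda_j)=\lambda_1 f(\lambda_j/\lambda_1)$, which follows from the standard-monotone property $f(x)=xf(1/x)$, the surviving sum becomes
\begin{align}
I^f(\rho,H)\geq f(0)\sum_{j\geq 2}\frac{(\lambda_1-\lambda_j)^2}{\lambda_1 f(\lambda_j/\lambda_1)}|\braket{\Phi|H|j}|^2.
\end{align}
Because $\lambda_1\geq 1-\delta$, every $j\geq 2$ satisfies $\lambda_j\leq 1-\lambda_1\leq \delta$ and hence $\lambda_j/\lambda_1\leq \delta/(1-\delta)$. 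Monotonicity of $f$ then gives $f(\lambda_j/\lambda_1)\leq f(\delta/(1-\delta))$, while $\lambda_1-\lambda_j\geq\lambda_1(1-2\delta)/(1-\delta)$ combined with $\lambda_1\geq 1-\delta$ bounds the numerator from below. This produces a $j$-independent prefactor, and the remaining sum $\sum_{j\geq 2}|\braket{\Phi|H|j}|^2$ equals $\mathrm{Var}(\Phi,H)$ because $\{\ket{j}\}_{j\geq 2}$ is an orthonormal basis of $\Phi^\perp$, and $\braket{\Phi|H^2|\Phi}-(\braket{\Phi|H|\Phi})^2$ is precisely the sum over $j\geq 2$.

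The only subtlety is that the symmetrization $\lambda_j f(\lambda_1/\lambda_j)=\lambda_1 f(\lambda_j/\lambda_1)$ has to make sense when $\rho$ is not full rank; but by the regularity assumption $f(0)\neq 0$, the product $\lambda_1 f(\lambda_j/\lambda_1)$ extends continuously to $\lambda_j=0$ with value $\lambda_1 f(0)>0$, so the argument goes through uniformly in $j$ without a separate treatment of vanishing eigenvalues.
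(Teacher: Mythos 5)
Your proposal is correct, and the argument for Eq.~\eqref{eq:masi_lowerbound_infidelity} is essentially the paper's own: retain only the (non-negative) cross terms involving $\Phi$ in the spectral formula~\eqref{eq:skew_info_def}, bound the numerator via $\lambda_1\geq 1-\delta$ and $\lambda_j\leq\delta$, use monotonicity of $f$ together with $\lambda_j/\lambda_1\leq\delta/(1-\delta)$ on the denominator, and identify the residual sum with $\mathrm{Var}(\Phi,H)$. The only substantive difference is that you prove Eq.~\eqref{eq:infidelity_largest_ev} from scratch (your fidelity decomposition and the elementary inequality $\sqrt{(1-2\delta)/(1-\delta)}\geq 1-2\delta$ both check out), whereas the paper simply imports it from Corollary~1 of \cite{marvian_coherence_2020}; your explicit remark on why the symmetrized denominator $\lambda_1 f(\lambda_j/\lambda_1)$ remains well defined for vanishing $\lambda_j$ when $f$ is regular is a point the paper leaves implicit.
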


\begin{proof}
Equation~\eqref{eq:infidelity_largest_ev} is proven in Corollary~1 in \cite{marvian_coherence_2020}. We here prove Eq.~\eqref{eq:masi_lowerbound_infidelity}. Let
\begin{align}
    \rho=p\ket{\Phi}\bra{\Phi}+\sum_{i=1}^k\lambda_i\ket{i}\bra{i}\eqqcolon\sum_{i=0}^k\lambda_k\ket{i}\bra{i}
\end{align}
be the eigenvalue decomposition, where we defined $\lambda_0\coloneqq p$ and $\ket{0}\coloneqq\ket{\Phi}$. 
Since 
\begin{align}
    \braket{\Psi|\rho|\Psi}=\sum_{i=0}^k\lambda_k|\braket{\Psi|i}|^2\leq p\sum_{i=0}^k|\braket{\Psi|i}|^2=p, 
\end{align}
we have
\begin{align}
     1-\delta\leq p.
\end{align}
By using $1=p+\sum_{i=1}^k\lambda_i$, we get $\lambda_i\leq 1-p\leq \delta$ for all $i=1,2,\cdots, k$.

Equation~\eqref{eq:masi_lowerbound_infidelity} is derived as
\begin{align}
    I^f(\rho,H)&=\frac{f(0)}{2}\sum_{i,j=0}^k\frac{(\lambda_i-\lambda_j)^2}{\lambda_jf(\lambda_i/\lambda_j)}|\braket{i|H|j}|^2\\
    &=2\times \frac{f(0)}{2}\sum_{i=1}^k\frac{(\lambda_i-p)^2}{pf(\lambda_i/p)}|\braket{i|H|\Phi}|^2+\frac{f(0)}{2}\sum_{i,j=1}^k\frac{(\lambda_i-\lambda_j)^2}{\lambda_jf(\lambda_i/\lambda_j)}|\braket{i|H|j}|^2\\
    &\geq f(0)\sum_{i=1}^k\frac{(\lambda_i-p)^2}{pf(\lambda_i/p)}|\braket{i|H|\Phi}|^2.
\end{align}
From $\lambda_i\leq \delta$ and $1-\delta\leq p$, we have $p-\lambda_i\geq 1-2\delta$. Since $\delta<1/2$ implies $1-2\delta>0$, it holds $(p-\lambda_i)^2\geq (1-2\delta)^2$. On the other hand since $f$ is an operator monotone and hence a monotonic function, we get
\begin{align}
    f\left(\frac{\lambda_i}{p}\right)\leq f\left(\frac{\delta}{1-\delta}\right),
\end{align}
where we used $\lambda_i/p\leq \delta/(1-\delta)$. 
From these inequalities and $p\leq1$, we obtain a lower bound of metric adjusted skew informations as
\begin{align}
     I^f(\rho,H)&\geq \frac{f(0)}{f\left(\frac{\delta}{1-\delta}\right)}(1-2\delta)^2\sum_{i=1}^k|\braket{i|H|\Phi}|^2\\
    &=\frac{f(0)}{f\left(\frac{\delta}{1-\delta}\right)}(1-2\delta)^2\mathrm{Var}(\Phi,H)
\end{align}
\end{proof}

Consider a case where a state $\rho$ is close to a pure state $\Psi$. Equation~\eqref{eq:masi_lowerbound_infidelity} shows that metric adjusted skew informations $I^f(\rho,H)$ of are lower bounded by the variance of the eigenstate $\Phi$ of $\rho$ with largest eigenvalue. Furthermore, Eq.~\eqref{eq:infidelity_largest_ev} shows that $\Phi$ is also close to $\Psi$ in the trace distance. Therefore, a lower bound of $I^f(\rho_m,H_{\ceil{Rm}})$ in Lemma~\ref{lem:local_minima_MASI} can be derived by analyzing the asymptotic behavior of the energy variances of pure states that are close to $\psi^{\otimes \ceil{Rm}}$. 

For this purpose, let us briefly review the asymptotic behavior of the energy distribution of $\psi^{\otimes \ceil{Rm}}$. Let us define the energy distribution
\begin{align}
    p_{\psi^{\otimes \ceil{Rm}}}(E)\coloneqq \Braket{\psi^{\otimes \ceil{Rm}}|\Pi_{E}^m|\psi^{\otimes \ceil{Rm}}},\quad E\in \mathrm{Spec}(H_{\mathrm{iid},\ceil{Rm}}),
\end{align}
where $\mathrm{Spec}(A)$ denotes the set of all eigenvalues of an operator $A$ and $\Pi_E^m$ denotes the projectors to the eigenspace of $H_{\mathrm{iid},\ceil{Rm}}$ with eigenvalue $E$. Since $\psi$ has a period $2\pi$, $p_{\psi^{\otimes \ceil{Rm}}}(E)=0$ if $E\notin \mathbb{Z}$. In the asymptotic limit of $m\to\infty$, it is known that the energy distribution $p_m\coloneqq \{p_{\psi^{\otimes \ceil{Rm}}}(E)\}_{E}$ converges to the Poisson distribution up to a shift. Precisely, under the assumption that the absolute third moment is finite, i.e., $\braket{\psi||H|^3|\psi}<\infty$, it is shown \cite{barbour_total_2002,marvian_operational_2022,yamaguchi_beyond_2022} that there exists $k_m\in\mathbb{Z}$ for each $m$ such that
\begin{align}
    \lim_{m\to\infty}d_{\mathrm{TV}}\left(p_{\psi^{\otimes \ceil{Rm}}},\Upsilon_{k_m}\mathrm{P}_{mR\mathrm{Var}(\psi,H)}\right)=0,\label{eq:convergence_TP}
\end{align}
where $d_{\mathrm{TV}}$ denotes the total variation distance and $\Upsilon_k$ for $k\in\mathbb{Z}$ denotes the shift operation for a probability distribution defined by $(\Upsilon_k p)(n)\coloneqq p(n-k)$.

Now, let us prove an asymptotic behavior of the variance of random variables which approximately follows the Poisson distribution:
\begin{lemma}\label{lem:variance_lowerbound}
Fix a positive parameter $\lambda>0$. Let $q=\{q_m\}_m$ be a sequence of probability distributions such that 
\begin{align}
    \exists M>0,\quad \forall m>M, \quad \exists k_m\in\mathbb{Z},\quad  d_{\mathrm{TV}}\left(q_m,\Upsilon_{k_m}\mathrm{P}_{m\lambda}\right)\leq \epsilon 
\end{align}
for a sufficiently small parameter $\epsilon$. 
Then there exists a function $\gamma_\lambda(\epsilon)$ such that
\begin{align}
    \lim_{\epsilon\to 0}\gamma_\lambda(\epsilon)=0
\end{align}
and
\begin{align}
    \mathrm{Var}\left(q_m\right)
    &\geq \mathrm{Var}\left(\mathrm{P}_{m\lambda}\right)-\gamma_\lambda(\epsilon) m
\end{align}
holds for all sufficiently large $m$.

\end{lemma}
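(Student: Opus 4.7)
The plan is to reduce the variance comparison to a bounded-observable estimate via truncation, and then exploit the light (Poisson) tails of the reference. The key device is the pairwise representation
\begin{align}
\mathrm{Var}(q_m)=\tfrac{1}{2}\,\mathbb{E}_{(X_1,X_2)\sim q_m\otimes q_m}\!\bigl[(X_1-X_2)^2\bigr],
\end{align}
which is manifestly translation invariant, so the nuisance shift $k_m$ drops out as soon as we pass to differences. This circumvents the need to track $\mathbb{E}_{q_m}[X]$, which need not be close to $m\lambda$ under a mere TV bound.

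First I would introduce the cut-off $h_K(x)\coloneqq\min(x^2,K^2)$ with $K>0$ to be chosen later. Since $(X_1-X_2)^2\ge h_K(X_1-X_2)$,
\begin{align}
\mathrm{Var}(q_m)\ge\tfrac{1}{2}\,\mathbb{E}_{q_m\otimes q_m}[h_K(X_1-X_2)].
\end{align}
The function $h_K$ is uniformly bounded by $K^2$, and $\dtv(q_m\otimes q_m,\,r_m\otimes r_m)\le 2\dtv(q_m,r_m)\le 2\epsilon$ for $r_m\coloneqq\Upsilon_{k_m}\mathrm{P}_{m\lambda}$, so the standard TV--expectation bound gives
\begin{align}
\mathbb{E}_{q_m\otimes q_m}[h_K(X_1-X_2)]\ge\mathbb{E}_{r_m\otimes r_m}[h_K(Y_1-Y_2)]-4K^2\epsilon.
\end{align}

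Next, because the shift $k_m$ cancels in the difference, $Y_1-Y_2$ follows the Skellam law with parameters $(m\lambda,m\lambda)$: mean $0$, variance $2m\lambda$, and fourth central moment $12(m\lambda)^2+2m\lambda$. Writing $h_K(x)=x^2-(x^2-K^2)_+$ and applying Markov's inequality to $(Y_1-Y_2)^2$ exceeding $K^2$,
\begin{align}
\mathbb{E}[h_K(Y_1-Y_2)]\ge 2m\lambda-\frac{\mathbb{E}[(Y_1-Y_2)^4]}{K^2}\ge 2m\lambda-\frac{14\,m^2\lambda^2}{K^2}
\end{align}
for all $m$ large enough that $m\lambda\ge 1$. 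Combining,
\begin{align}
\mathrm{Var}(q_m)\ge m\lambda-\frac{7\,m^2\lambda^2}{K^2}-2K^2\epsilon,
\end{align}
and the optimal choice $K^2=m\lambda/\sqrt{\epsilon}$ balances the two error terms, producing $\mathrm{Var}(q_m)\ge m\lambda-9\,m\lambda\sqrt{\epsilon}$, so that $\gamma_\lambda(\epsilon)=9\lambda\sqrt{\epsilon}$ works.

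The main obstacle is conceptual rather than technical: the variance functional is \emph{not} continuous with respect to $\dtv$ on $\mathbb{Z}$-valued distributions, because an $\epsilon$-fraction of probability mass displaced arbitrarily far in either direction can change the variance by an unbounded amount. The whole point of combining the truncation $h_K$ with the pairwise representation is to replace the unbounded observable $(X_1-X_2)^2$ by a bounded one (where TV contraction is effective), while the light tails of the Skellam distribution --- controlled by its fourth central moment being only $O(m^2)$ --- keep the truncation loss on the correct $o_{\epsilon\to 0}(1)\cdot m$ scale. This is the only step where the Poisson assumption is really exploited.
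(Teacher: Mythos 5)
Your proof is correct, and it takes a genuinely different route from the paper's. The paper works directly with the signed difference $\delta_m=\mathrm{P}_{m\lambda}-q_m$, splits the sum $\sum_n(n-\mu_{q_m})^2\delta_m(n)$ over a window of width $\alpha_\epsilon\sqrt{m}$ around the Poisson mean and its complement, invokes the local limit behaviour of the Poisson pmf via Stirling's formula together with Chebyshev's inequality for the tail mass, and must explicitly track the unknown mean shift $\Delta\mu_m$, finishing with a quadratic minimization in $|\Delta\mu_m|$; the vanishing of its $\gamma_\lambda(\epsilon)$ rests on the asymptotics of $g^{-1}(1-\epsilon)$ (justified there by a figure). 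Your symmetrization $\mathrm{Var}(q_m)=\tfrac12\mathbb{E}_{q_m\otimes q_m}[(X_1-X_2)^2]$ eliminates both the shift $k_m$ and the mean at a stroke, the truncation $h_K$ converts the problem into one where the total-variation hypothesis is directly usable, and the Skellam fourth moment (correctly computed as $12(m\lambda)^2+2m\lambda$) controls the truncation loss, yielding the fully explicit and dimension-free modulus $\gamma_\lambda(\epsilon)=9\lambda\sqrt{\epsilon}$ for all $m\geq 1/\lambda$; the constants in your chain ($4K^2\epsilon$ from the product TV bound, the $14m^2\lambda^2/K^2$ tail term, and the optimization $K^2=m\lambda/\sqrt{\epsilon}$) all check out. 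What each approach buys: yours is shorter, needs no local CLT, handles distributions not supported on $\mathbb{Z}$ without the paper's footnoted caveat, and gives a closed-form rate; the paper's window argument gives an asymptotically sharper modulus of order $\epsilon\log(1/\epsilon)$ rather than $\sqrt{\epsilon}$, though nothing in the application requires that sharpness. The only (trivial) point left implicit in your write-up is that if $q_m$ has infinite second moment the claimed lower bound holds vacuously, so the pairwise representation may be assumed finite.
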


\begin{proof}
For simplicity, we hereafter assume that the probability distributions $q_m$ are defined on integers \footnote{The proof is valid for a general case if we replace the equalities in Eqs.~\eqref{eq:dtv_bound} and \eqref{eq:variance_firstline} with $\leq$ and $\geq$, respectively.}. Since the variance is invariant under the translation operation $\Upsilon_k$, we can assume that $k_m=0$ without loss of generality. 
Let us introduce the following notations: 
\begin{align}
    \delta_{m}(n)&\coloneqq \mathrm{P}_{m\lambda}(n)-q_m(n),\\
    \mu_{\mathrm{P}_{m\lambda}}&\coloneqq \sum_{n}n\mathrm{P}_{m\lambda}(n)=m\lambda\\
    \mu_{q_m}&\coloneqq \sum_{n} nq_m(n),\\
    \Delta\mu_m&\coloneqq  \mu_{\mathrm{P}_{m\lambda}}-\mu_{q_m}=\sum_{n}n\delta_m(n).
\end{align}
From $\mathrm{P}_{m\lambda}(n)\geq 0$ and $q_m(n)\geq 0$, we have 
\begin{align}
    \mathrm{P}_{m\lambda}(n)\geq \delta_m(n)\geq -q_m(n).
\end{align}
Defining $ A_m\coloneqq \left\{n\in\mathbb{Z}\middle| \delta_m(n)>0\right\}$, we have
\begin{align}
    \sum_{n\in A_m}\delta_m(n)=d_{\mathrm{TV}}\left(q_m,\mathrm{P}_{m\lambda}\right).\label{eq:dtv_bound}
\end{align}

The variance of $q_m$ is decomposed into
\begin{align}
    \mathrm{Var}\left(q_m\right)&=\sum_n (n-\mu_{q_m})^2q_m(n)\label{eq:variance_firstline}\\
    &=\sum_n (n-\mu_{q_m})^2\left(\mathrm{P}_{m\lambda}(n)-\delta_m(n)\right)\\
    &=\mathrm{Var}(\mathrm{P}_{m\lambda})+\Delta\mu_{m}^2-\sum_n (n-\mu_{q_m})^2\delta_m(n).
\end{align}
To provide a lower bound of this quantity, let us introduce an interval $I_m\coloneqq [\mu_{\mathrm{P}_{m\lambda}}-a_m,\mu_{\mathrm{P}_{m\lambda}}+a_m]\cap\mathbb{Z}$, where $a_m=\alpha_\epsilon \sqrt{m}$. Here, $\alpha_\epsilon>0$ for $\epsilon>0$ is defined by $\alpha_\epsilon\sqrt{\lambda}\coloneqq g^{-1}\left(1-\epsilon\right)$, where 
\begin{align}
    g(x)\coloneqq\int_{-x}^x \dd\beta \frac{\beta^2}{\sqrt{2\pi}}e^{-\frac{\beta^2}{2}}= \mathrm{Erf}\left(\frac{x}{\sqrt{2}}\right)-\frac{2x}{\sqrt{2\pi}}e^{-\frac{x^2}{2}}.
\end{align}
Note that $\alpha_\epsilon$ is uniquely determined since $g(x)$ is monotonic and takes any values in $[0,1]$ for $x\geq0$. 

Defining $\bar{I}_m\coloneqq\mathbb{Z}\setminus I_m$, we have 
\begin{align}
    &\sum_n (n-\mu_{q_m})^2\delta_m(n)\\
    &=\sum_{n\in I_m} (n-\mu_{q_m})^2\delta_m(n)+\sum_{n\in \bar{I}_m} (n-\mu_{q_m})^2\delta_m(n)\\
    &\leq \sum_{n\in I_m \cap A_m} (n-\mu_{q_m})^2\delta_m(n)+\sum_{n\in \bar{I}_m} (n-\mu_{q_m})^2\mathrm{P}_{m\lambda}(n)\\
    &\leq \left(|\Delta\mu_m|+a_m\right)^2\epsilon+\sum_{n\in \bar{I}_m} (n-\mu_{q_m})^2\mathrm{P}_{m\lambda}(n)\\
    &= \left(|\Delta\mu_m|+a_m\right)^2\epsilon+\Delta\mu_m^2\sum_{n\in\bar{I}_m}\mathrm{P}_{m\lambda}(n)+2\Delta\mu_m\sum_{n\in\bar{I}_m}(n-\mu_{\mathrm{P}_{m\lambda}})\mathrm{P}_{m\lambda}(n)+\sum_{n\in\bar{I}_m}(n-\mu_{\mathrm{P}_{m\lambda}})^2\mathrm{P}_{m\lambda}(n).
\end{align}

By using the Stirling formula
\begin{align}
    n!\sim \sqrt{2\pi n}\left(\frac{n}{e}\right)^n,
\end{align}
it holds
\begin{align}
    \mathrm{P}_{m\lambda}(m\lambda+k)&=\frac{(m\lambda)^{m\lambda+k}}{(m\lambda+k)!}e^{-m\lambda}\\
    &\sim \frac{1}{\sqrt{2\pi (m\lambda+k)}}\left(\frac{m\lambda}{m\lambda+k}\right)^{m\lambda+k}e^{k}\\
    &\sim  \frac{1}{\sqrt{2\pi m\lambda}} e^{-\frac{\beta^2}{2\lambda} }
\end{align}
for $k=\beta \sqrt{m}$ as $m\to\infty$. 

With this formula, we have
\begin{align}
    \lim_{m\to\infty}\frac{1}{\sqrt{m}}\sum_{n\in I_m}(n-\mu_{\mathrm{P}_{m\lambda}})\mathrm{P}_{m\lambda}(n)=\int_{-\alpha_\epsilon}^{\alpha_\epsilon}\dd \beta\frac{\beta}{\sqrt{2\pi\lambda}}e^{-\frac{\beta^2}{2\lambda}}=0,
\end{align}
and
\begin{align}
    \lim_{m\to\infty}\frac{1}{m}\sum_{n\in I_m}(n-\mu_{\mathrm{P}_{m\lambda}})^2\mathrm{P}_{m\lambda}(n)=\int_{-\alpha_\epsilon}^{\alpha_\epsilon}\dd \beta\frac{\beta^2}{\sqrt{2\pi\lambda}}e^{-\frac{\beta^2}{2\lambda}}=\lambda g(\alpha_\epsilon\sqrt{\lambda})=\lambda(1-\epsilon).
\end{align}
implying that 
\begin{align}
   \sum_{n\in\bar{I}_m} (n-\mu_{\mathrm{P}_{m\lambda}})\mathrm{P}_{m\lambda}(n)&=o(\sqrt{m})\\
   \sum_{n\in\bar{I}_m} (n-\mu_{\mathrm{P}_{m\lambda}})^2\mathrm{P}_{m\lambda}(n)&= \lambda\epsilon  m+o(m).
\end{align}
Therefore, for any positive constants $u$ and $v$ which are independent of $m$, 
\begin{align}
    \sum_{n\in\bar{I}_m} (n-\mu_{\mathrm{P}_{m\lambda}})\mathrm{P}_{m\lambda}(n)&\leq u\sqrt{m}\\
    \sum_{n\in\bar{I}_m} (n-\mu_{\mathrm{P}_{m\lambda}})^2\mathrm{P}_{m\lambda}(n)&= (\lambda\epsilon +v ) m
\end{align}
hold for all sufficiently large $m$. For future convenience, we take $u=\epsilon \alpha_\epsilon\sqrt{\lambda}$ and $v=\lambda\epsilon$, which implies that
\begin{align}
   \sum_{n\in\bar{I}_m} (n-\mu_{\mathrm{P}_{m\lambda}})\mathrm{P}_{m\lambda}(n)&\leq \epsilon\alpha_\epsilon\sqrt{m\lambda}\\
   \sum_{n\in\bar{I}_m} (n-\mu_{\mathrm{P}_{m\lambda}})^2\mathrm{P}_{m\lambda}(n)&\leq  2\lambda\epsilon m
\end{align}
hold for all sufficiently large $m$. 

On the other hand, by using Chebyshev's Inequality we have
\begin{align}
    \sum_{n\in \bar{I}_m}\mathrm{P}_{m\lambda}(n)\leq\frac{\lambda}{\alpha_\epsilon^2}.
\end{align}

By using these results, the variance of $q_m$ is lower bounded as
\begin{align}
    \mathrm{Var}\left(q_m\right)
    &=\mathrm{Var}(p_m)+\Delta\mu_{m}^2-\sum_n (n-\mu_{q_m})^2\delta_m(n)\\
    &\geq \mathrm{Var}(p_m)+\Delta\mu_{m}^2-\left( \left(|\Delta\mu_m|+\alpha_\epsilon\sqrt{m}\right)^2\epsilon+\Delta\mu_m^2\frac{\lambda}{\alpha_\epsilon^2}+2\Delta\mu_m\epsilon\alpha_\epsilon\sqrt{m\lambda}+2\lambda\epsilon m\right).
\end{align}
Since the right hand side is quadratic in $\Delta\mu_m$, we get
\begin{align}
    &\Delta\mu_{m}^2-\left( \left(|\Delta\mu_m|+\alpha_\epsilon\sqrt{m}\right)^2\epsilon+\Delta\mu_m^2\frac{\lambda}{\alpha_\epsilon^2}+2\Delta\mu_m\epsilon\alpha_\epsilon\sqrt{m\lambda}+2\lambda\epsilon m\right)\\
    &\geq \left(1-\epsilon-\frac{\lambda}{\alpha_\epsilon^2}\right)|\Delta\mu_{m}|^2-4\alpha_\epsilon\epsilon\sqrt{m\lambda}|\Delta\mu_{m}|-\left(\alpha_\epsilon^2\epsilon+2\lambda\epsilon \right)m \\
    &\geq -\left(\left(\alpha_\epsilon^2\epsilon+2\lambda\epsilon \right)+\frac{4\alpha_\epsilon^2\epsilon^2\lambda}{\left(1-\epsilon-\frac{\lambda}{\alpha_\epsilon^2}\right)}\right)m
\end{align}
where we have assumed that 
\begin{align}
    1-\epsilon-\frac{\lambda}{\alpha_\epsilon^2}>0
\end{align}
in the last line.
This assumption is true for a sufficiently small $\epsilon$ since $\lim_{\epsilon\to0}\alpha_\epsilon
=\lim_{\epsilon\to0}g^{-1}(1-\epsilon)/\sqrt{\lambda}=\infty$ holds.

We now introduce
\begin{align}
    \gamma_\lambda(\epsilon)\coloneqq \left(\alpha_\epsilon^2\epsilon+2\lambda\epsilon \right)+\frac{4\alpha_\epsilon^2\epsilon^2\lambda}{\left(1-\epsilon-\frac{\lambda}{\alpha_\epsilon^2}\right)}.
\end{align}
From Fig.~\ref{fig:alpha_epsilon}, it holds
\begin{align}
    \lim_{\epsilon\to 0} \epsilon \alpha_\epsilon&=\lim_{\epsilon \to 0}\epsilon g^{-1}(1-\epsilon)/\sqrt{\lambda}=0\\
    \lim_{\epsilon\to 0} \epsilon \alpha_\epsilon^2&=\lim_{\epsilon \to 0}\epsilon \left(g^{-1}(1-\epsilon)\right)^2/\lambda=0.
\end{align}
Therefore, we get $ \lim_{\epsilon\to0}\gamma_\lambda(\epsilon)=0$, completing the proof. 
\begin{figure}[htbp]
    \centering
    \includegraphics[width=10cm]{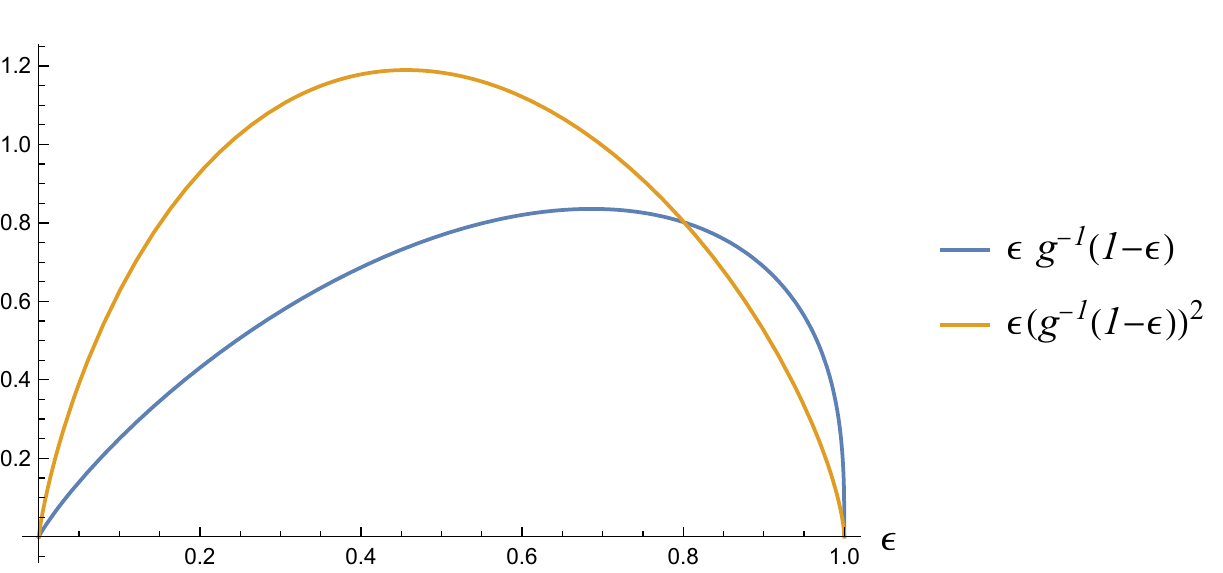}
    \caption{The behaviors of $\epsilon g^{-1}(1-\epsilon)$ and 
    $\epsilon (g^{-1}(1-\epsilon))^2
    $ for $\epsilon\in[0,1]$.}
    \label{fig:alpha_epsilon}
\end{figure}
\end{proof}

Combining these lemmas, we here prove Lemma~\ref{lem:local_minima_MASI}:
\begin{proof}[Proof of Lemma~\ref{lem:local_minima_MASI}]
We denote $\psi_m\coloneqq \psi^{\otimes \ceil{Rm}}$.
Let $\widehat{\rho}=\{\rho_m\}_m$ be an arbitrary sequence of states such that $\rho_m\in B^\epsilon(\psi_m)$. 
From the Fuchs-van de Graaf inequalities, an upper bound of the infidelity of $\rho_m$ and $\psi_m$ is derived as
\begin{align}
    1-\Braket{\psi_m|\rho_m|\psi_m}\leq 1-(1-D(\rho_m ,\psi_m))\leq 1-(1-\epsilon)^2\eqqcolon \delta_1. 
\end{align}

Let $\Phi_m$ be the eigenstate of $\rho_m$ with the largest eigenvalue. From Eq.~\eqref{eq:infidelity_largest_ev}, we get
\begin{align}
    \left|\Braket{\Phi_m|\psi_m}\right|\geq 1-\delta_1. 
\end{align}
For the probability distribution $p_{\Phi_m}$ of $\Phi_m$, it holds
\begin{align}
    d_{\mathrm{TV}}(p_{\Phi_m},p_{\psi_m})\leq D\left(\Phi_m,\psi_m\right)\leq \sqrt{1-(1-\delta_1)^2}.
\end{align}
By using Eq.~\eqref{eq:convergence_TP}, for any $\epsilon>0$,
\begin{align}
    \exists k_m\in\mathbb{Z},\quad 
    d_{\mathrm{TV}}(p_{\psi_m},\Upsilon_{k_m}\mathrm{P}_{m\lambda})\leq\epsilon
\end{align}
holds for all sufficiently large $m$, where $\lambda\coloneqq R\mathrm{Var}(\psi,H)$.
Therefore, from the triangle inequality, we get
\begin{align}
    \exists k_m\in\mathbb{Z},\quad 
    d_{\mathrm{TV}}(p_{\Phi_m},\Upsilon_{k_m}\mathrm{P}_{m\lambda})\leq \sqrt{1-(1-\delta_1)^2}+\epsilon\eqqcolon \delta_2
\end{align}
for all sufficiently large $m$. 
Note that $\lim_{\epsilon\to 0}\delta_1=\lim_{\epsilon\to 0}\delta_2=0$ holds. 
From Lemmas~\ref{lem:masi_lowerbound} and \ref{lem:variance_lowerbound}, we get
\begin{align}
     I(\rho_m,H_{\mathrm{iid},\ceil{mR}})
     &\geq \frac{f(0)}{f\left(\frac{\delta}{1-\delta_1}\right)}(1-2\delta_1)^2\mathrm{Var}(\Phi_m,H_m)\\
     &\geq m\left(\lambda -\gamma_\lambda (\delta_2)\right) \frac{f(0)}{f\left(\frac{\delta_1}{1-\delta_1}\right)}(1-2\delta_1)^2
\end{align}
as $m\to\infty$, where $\gamma_\lambda$ is defined in Lemma~\ref{lem:variance_lowerbound}. 
Defining
\begin{align}
    \delta^f(\epsilon)\coloneqq \left(\lambda -\gamma_\lambda (\delta_2)\right) \frac{f(0)}{f\left(\frac{\delta_1}{1-\delta_1}\right)}(1-2\delta_1)^2-\lambda,
\end{align}
we have $\lim_{\epsilon\to 0}\delta^f(\epsilon)=0$ and
\begin{align}
    I(\rho_m,H_{\mathrm{iid},\ceil{mR}})\geq m\lambda -m\delta^f(\epsilon)=mR\mathrm{Var}(\psi,H)-m\delta^f(\epsilon)
\end{align}
for all sufficiently large $m$. Since 
\begin{align}
    \lim_{m\to\infty}\frac{1}{m}I\left(\psi^{\otimes \ceil{Rm}},H_{\mathrm{iid},\ceil{mR}}\right)=R\mathrm{Var}(\psi,H),
\end{align}
this result can also be written as
\begin{align}
    I(\rho_m,H_{\mathrm{iid},\ceil{mR}})\geq I\left(\psi^{\otimes \ceil{Rm}},H_{\mathrm{iid},\ceil{mR}}\right) -m\delta^f(\epsilon)+o(m)\quad\quad (m\to\infty).
\end{align}

\end{proof}

As an immediate corollary, we get Theorem~\ref{thm:smooth_masi_rates_iid}:
\begin{theorem*}[Restatement of Theorem~\ref{thm:smooth_masi_rates_iid}]
Let $\psi$ be a pure state having period $2\pi$ for a Hamiltonian $H$. Assume that the third absolute moment of the Hamiltonian is finite, i.e., $\braket{\psi||H|^3|\psi}<\infty$. For a positive parameter $R>0$, define $\widehat{\psi}_{\mathrm{iid}}(R)\coloneqq \{\psi^{\otimes \ceil{Rm}}\}_m$ and $\widehat{H}_{\mathrm{iid}}(R)\coloneqq \{H_{\mathrm{iid},\ceil{Rm}}\}_m$, where $H_{\mathrm{iid},k}\coloneqq \sum_{i=1}^k \mathbb{I}^{\otimes i-1}\otimes H\otimes\mathbb{I}^{\otimes k-i} $. The smooth metric adjusted skew information rates for this i.i.d. sequence are given by
\begin{align}
    &I_{+}^f(\widehat{\psi}_{\mathrm{iid}}(R),\widehat{H}_{\mathrm{iid}}(R))= I_{-}^f(\widehat{\psi}_{\mathrm{iid}}(R),\widehat{H}_{\mathrm{iid}}(R))\nonumber\\
    &=\lim_{m\to\infty}\frac{1}{m}I^{f}(\psi^{\otimes \ceil{Rm}},H_{\ceil{Rm}})=I^f\left(\psi,H\right)R.
\end{align}
\end{theorem*}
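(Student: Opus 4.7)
The plan is to deduce Theorem~\ref{thm:smooth_masi_rates_iid} as a sandwich argument built around Lemma~\ref{lem:local_minima_MASI}. The baseline rate is easy: since metric adjusted skew information is additive on product states \cite{hansen_metric_2008} and equals the variance on pure states, $I^f(\psi^{\otimes \ceil{Rm}}, H_{\mathrm{iid},\ceil{Rm}}) = \ceil{Rm}\,\mathrm{Var}(\psi, H)$, and dividing by $m$ together with $\ceil{Rm}/m\to R$ yields $\lim_{m\to\infty}\frac{1}{m}I^f(\psi^{\otimes \ceil{Rm}}, H_{\ceil{Rm}}) = R\,I^f(\psi,H)$.

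Next, the sup-rate is upper bounded for free: evaluating the infimum in the definition of $I^f_\epsilon$ at the trivial choice $\sigma = \psi^{\otimes\ceil{Rm}}$ gives $I_\epsilon^f(\psi^{\otimes\ceil{Rm}}, H_{\mathrm{iid},\ceil{Rm}}) \leq I^f(\psi^{\otimes\ceil{Rm}}, H_{\mathrm{iid},\ceil{Rm}})$. Dividing by $m$, taking $\limsup_m$, and sending $\epsilon\to 0^+$ yields $I_+^f(\widehat{\psi}_{\mathrm{iid}}(R), \widehat{H}_{\mathrm{iid}}(R)) \leq R\,I^f(\psi,H)$.

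The substantive work is the lower bound on $I_-^f$. For each $m$ I would select a near-infimizer $\sigma_m^\epsilon \in B^\epsilon(\psi^{\otimes\ceil{Rm}})$ with $I^f(\sigma_m^\epsilon, H_{\mathrm{iid},\ceil{Rm}}) \leq I_\epsilon^f(\psi^{\otimes\ceil{Rm}}, H_{\mathrm{iid},\ceil{Rm}}) + 1/m$. Applying Lemma~\ref{lem:local_minima_MASI} to the sequence $\{\sigma_m^\epsilon\}_m$ gives $I_\epsilon^f(\psi^{\otimes\ceil{Rm}}, H_{\mathrm{iid},\ceil{Rm}}) \geq \ceil{Rm}\,\mathrm{Var}(\psi,H) - m\,\delta^f(\epsilon) + o(m)$. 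Dividing by $m$, taking $\liminf_{m\to\infty}$, and then sending $\epsilon\to 0^+$ with $\delta^f(\epsilon)\to 0$ yields $I_-^f \geq R\,I^f(\psi,H)$. The two bounds together with the elementary inequality $I_+^f \geq I_-^f$ pin all three quantities to the common value $R\,I^f(\psi,H)$.

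There is no substantive obstacle beyond invoking Lemma~\ref{lem:local_minima_MASI}; indeed the paper calls the theorem ``an immediate corollary'' of it. The only point requiring mild care is that the lemma's function $\delta^f(\epsilon)$ must be independent of the particular near-infimizing sequence; inspection of its construction (which depends only on $\epsilon$, $R\,\mathrm{Var}(\psi,H)$, and $f$) confirms this uniformity, so the lower bound survives passage to the infimum over $B^\epsilon$.
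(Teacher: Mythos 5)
Your proposal is correct and follows essentially the same route as the paper: the lower bound on $I_-^f$ via Lemma~\ref{lem:local_minima_MASI} applied to (near-)infimizing sequences in $B^\epsilon(\psi^{\otimes\ceil{Rm}})$, combined with the trivial upper bound $I_\epsilon^f \leq I^f$ and the additivity computation $I^f(\psi^{\otimes\ceil{Rm}},H_{\mathrm{iid},\ceil{Rm}})=\ceil{Rm}\,I^f(\psi,H)$. Your explicit treatment of the near-infimizer selection and the uniformity of $\delta^f(\epsilon)$ merely spells out details the paper leaves implicit.
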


\begin{proof}
From Lemma~\ref{lem:local_minima_MASI}, we get
\begin{align}
    I_+^f (\widehat{\psi}_{\mathrm{iid}}(R),\widehat{H}_{\mathrm{iid}}(R))\geq I_-^f (\widehat{\psi}_{\mathrm{iid}}(R),\widehat{H}_{\mathrm{iid}}(R))\geq I(\psi,H)R.
\end{align}
A straightforward calculation shows
\begin{align}
    \lim_{m\to\infty}\frac{1}{m}I^f(\psi^{\otimes \ceil{Rm}},H_{\mathrm{iid},\ceil{Rm}})=\lim_{m\to\infty}\frac{1}{m}\ceil{Rm}I(\psi,H)=I(\psi,H)R.
\end{align}
Since  
\begin{align}
    \lim_{m\to\infty}\frac{1}{m}I^f(\psi^{\otimes \ceil{Rm}},H_{\mathrm{iid},\ceil{Rm}})\geq I_+^f (\widehat{\psi}_{\mathrm{iid}}(R),\widehat{H}_{\mathrm{iid}}(R))\geq I_-^f (\widehat{\psi}_{\mathrm{iid}}(R),\widehat{H}_{\mathrm{iid}}(R))
\end{align}
holds by definition of $I^f_\pm$, we get
\begin{align}
    I_+^f (\widehat{\psi}_{\mathrm{iid}}(R),\widehat{H}_{\mathrm{iid}}(R))=I_-^f (\widehat{\psi}_{\mathrm{iid}}(R),\widehat{H}_{\mathrm{iid}}(R))= I(\psi,H)R. 
\end{align}
\end{proof}

As another corollary, we obtain an alternative proof of the converse part of the conversion theory for i.i.d. pure states \cite{marvian_operational_2022}:
\begin{corollary}
Let $\psi$ and $\phi$ be pure states with period $2\pi$ with Hamiltonians $H$ and $H'$, respectively. If $(\widehat{\psi},\widehat{H}_{\mathrm{iid}})\cova (\widehat{\phi}(R),\widehat{H}'_{\mathrm{iid}}(R))$ holds for $\widehat{\psi}=\{\psi^{\otimes m}\}_m$ and $\widehat{\phi}(R)=\{\phi^{\ceil{Rm}}\}_m$, then it holds
\begin{align}
    \mathrm{Var}(\psi,H)\geq \mathrm{Var}(\phi,H') R.
\end{align}
i.e., 
\begin{align}
    R\leq \frac{\mathrm{Var}(\psi,H)}{\mathrm{Var}(\phi,H')}=\frac{\mathcal{F}(\psi,H)}{\mathcal{F}(\phi,H')}.
\end{align}
\end{corollary}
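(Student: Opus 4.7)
The plan is to chain together the two preceding main theorems of the paper: the asymptotic monotonicity of the smooth metric adjusted skew information rates (Theorem~\ref{thm:smooth_masi_rates_valid_measures}) and their explicit evaluation on i.i.d.\ pure sequences (Theorem~\ref{thm:smooth_masi_rates_iid}). Since $\psi$ and $\phi$ are pure, the metric adjusted skew information coincides with the variance, so the i.i.d.\ formula will produce variances on both sides of the resulting inequality, which is exactly the desired conclusion.

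Concretely, the first step is to invoke the monotonicity of, say, $I_+^f$ under $\cova$: from the hypothesis $(\widehat{\psi},\widehat{H}_{\mathrm{iid}})\cova(\widehat{\phi}(R),\widehat{H}'_{\mathrm{iid}}(R))$ one immediately obtains
\begin{align}
    I_+^f(\widehat{\psi},\widehat{H}_{\mathrm{iid}})\geq I_+^f(\widehat{\phi}(R),\widehat{H}'_{\mathrm{iid}}(R)).
\end{align}
The second step is to evaluate both sides via Theorem~\ref{thm:smooth_masi_rates_iid}, applied with rate $1$ to $(\psi,H)$ on the left and with rate $R$ to $(\phi,H')$ on the right, yielding $I_+^f(\widehat{\psi},\widehat{H}_{\mathrm{iid}})=I^f(\psi,H)$ and $I_+^f(\widehat{\phi}(R),\widehat{H}'_{\mathrm{iid}}(R))=I^f(\phi,H')R$. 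The third step is purely bookkeeping: since $\psi$ and $\phi$ are pure states, the normalization built into the definition of $I^f$ gives $I^f(\psi,H)=\mathrm{Var}(\psi,H)$ and $I^f(\phi,H')=\mathrm{Var}(\phi,H')$, which combine with the previous inequality to produce the advertised bound. The identity $\mathrm{Var}(\psi,H)/\mathrm{Var}(\phi,H')=\mathcal{F}(\psi,H)/\mathcal{F}(\phi,H')$ claimed in the second displayed line then follows from the pure-state relation $\mathcal{F}=4\,\mathrm{Var}$ recorded earlier in the paper.

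There is essentially no hard step: the nontrivial analytic work has already been done in proving Theorems~\ref{thm:smooth_masi_rates_valid_measures} and \ref{thm:smooth_masi_rates_iid}. The only minor points to verify are that the hypotheses of Theorem~\ref{thm:smooth_masi_rates_iid} apply to both $\psi$ and $\phi$ — the period assumption is explicit, while finiteness of the absolute third moment of the Hamiltonian should either be assumed or noted to be automatic for the finite-dimensional settings typically considered — and that the choice of regular standard monotone function $f$ is immaterial for the conclusion, so one may as well take $f=f_{\mathrm{SLD}}$ to make the final transition to $\mathcal{F}$ most transparent. The same argument runs verbatim with $I_-^f$ in place of $I_+^f$, reflecting the fact that on i.i.d.\ pure sequences the sup- and inf-rates coincide.
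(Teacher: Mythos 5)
Your proposal matches the paper's own proof exactly: both apply the asymptotic monotonicity of $I_\pm^f$ to the hypothesized conversion and then evaluate both sides via Theorem~\ref{thm:smooth_masi_rates_iid}, using $I^f(\psi,H)=\mathrm{Var}(\psi,H)$ for pure states. The argument is correct and no further comment is needed.
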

\begin{proof}
From the monotonicity of the smooth metric adjusted skew information rates, we get
\begin{align}
    I^{f}_\pm(\widehat{\psi},\widehat{H})\geq I^{f}_\pm(\widehat{\phi}(R),\widehat{H}'(R)).
\end{align}
From Theorem~\ref{thm:smooth_masi_rates_iid}, this inequality implies
\begin{align}
    I(\psi,H)\geq I(\phi,H')R.
\end{align}
\end{proof}

\section{Review of the spectral quantum Fisher information rates}\label{app:spectral_qfi}

We here briefly review the results in \cite{yamaguchi_beyond_2022}. For a pure state $\psi$ and a Hamiltonian $H$, the period is defined as
\begin{align}
    \tau\coloneqq \inf_{t>0}\left\{t\,\middle|\,|\braket{\psi|e^{-\ii t H}|\psi}|=1\right\}.
\end{align}
We assume that $H$ is bounded below and $0<\tau<\infty$. Without loss of generality, we can set the period to be $2\pi$ by rescaling the Hamiltonian as $H\to\frac{\tau}{2\pi}H$. In the following, we always assume that pure states have period $2\pi$ for simplicity. In this case, the pure state $\psi$ has support in the eigenspaces of the Hamiltonian with eigenvalues given by $n+E_0$, where $n$ is a positive integer and $E_0$ is a constant. Shifting the Hamiltonian by a constant, we can assume that $E_0=0$. We define the energy distribution of a pure state $\psi$ by
\begin{align}
    p_\psi(n)\coloneqq \braket{\psi|\Pi_n|\psi}\quad (n\in \mathbb{Z}_{\geq 0}),
\end{align}
where $\Pi_n$ is a projector to the eigenspace of the Hamiltonian with eigenvalue $n$. An essential fact is that the exact convertibility among pure states is fully characterized by the energy distribution \cite{gour_resource_2008}. This is because $(\psi,H)\cov(\psi_{\mathrm{HO}}',H_{\mathrm{HO}})$ and $(\psi_{\mathrm{HO}}',H_{\mathrm{HO}})\cov (\psi,H)$ holds for a state
\begin{align}
    \ket{\psi_{\mathrm{HO}}'}\coloneqq \sum_{n= 0}^\infty \sqrt{p_{\psi}(n)}\ket{n}\label{eq:ref_state_ho}
\end{align}
of a harmonic oscillator system with Hamiltonian $H_{\mathrm{HO}}=\sum_{n=0}^\infty n\ket{n}\bra{n}$. This observation can also be extended to the convertibility with vanishing error in the asymptotic regime. For further detail, see also \cite{yamaguchi_beyond_2022}.

To define the max- and min-quantum Fisher information, we introduce several notations. For a real number $\lambda\in\mathbb{R}$, we define a generalized Poission distribution $\mathrm{P}_{\lambda}=\{\mathrm{P}_{\lambda}(n)\}_{n\in\mathbb{Z}}$ by
\begin{align}
    \mathrm{P}_{\lambda}(n)\coloneqq 
    \begin{cases}
    e^{-\lambda}\frac{\lambda^n}{n!}&\quad (n\geq 0)\\
    0&\quad (n< 0)
    \end{cases}.
\end{align}
For $\lambda\geq 0$, this is an ordinary Poisson distribution. However, for $\lambda< 0$, it is not a probability distribution since some of the element becomes negative. For sequences of numbers $a=\{a(n)\}_{n\in\mathbb{Z}}$ and $b=\{b(n)\}_{n\in\mathbb{Z}}$, we define the convolution sequence $(a*b)=\{(a*b)(n)\}_{n\in\mathbb{Z}}$ by $(a*b)(n)\coloneqq \sum_{k\in\mathbb{Z}}a(n-k)b(k)$. For a given sequence $q$, $\widetilde{q}$ denotes its ``inverse" sequence with respect to $*$ in the sense that $(q*\widetilde{q})(n)=\delta_{n,0}$, where $\delta_{n,m}$ is the Kronecker delta. We denote $a\geq 0$ for $a=\{a(n)\}_{n}$ if and only if $a(n)\geq 0$ for all $n$. The max- and min-quantum Fisher information are defined by \cite{yamaguchi_beyond_2022}
\begin{align}
    \mathcal{F}_{\max}\left(\psi,H\right)&\coloneqq \inf\left\{4\lambda\mid \mathrm{P}_{\lambda}*\widetilde{p_{\psi}}\geq 0\right\},\\
    \mathcal{F}_{\min}\left(\psi,H\right)&\coloneqq \sup\left\{4\lambda\mid p_{\psi} *\mathrm{P}_{-\lambda}\geq 0\right\}.
\end{align}

To define the smooth max- and min-quantum Fisher information, let us first assume that the system of interest is a harmonic oscillator. The max-quantum Fisher information is extended to a mixed state $\rho$ by $\mathcal{F}_{\max}(\rho,H_{\mathrm{HO}})=\inf_{\Phi_\rho}\mathcal{F}_{\max}\left(\Phi_\rho,H_{\mathrm{HO}}+H_A\right)$, where the infimum is taken over the set of all purification $\Phi_\rho$ of $\rho$ and $H_A$ is a Hamiltonian of an ancillary system with integer eigenvalues. The $\epsilon$-smooth max- and min-quantum Fisher information are then defined as
\begin{align}
    \mathcal{F}^\epsilon_{\max} (\psi,H_{\mathrm{HO}})&\coloneqq \inf_{\rho\in B^\epsilon(\psi)}(\rho,H_{\mathrm{HO}}),\quad \mathcal{F}^\epsilon_{\min} (\psi,H_{\mathrm{HO}})\coloneqq \sup_{\phi\in B_{\mathrm{pure}}^\epsilon(\psi)}\mathcal{F}^\epsilon_{\min} (\phi,H_{\mathrm{HO}}),
\end{align}
where $\epsilon$-balls are defined by
$B^\epsilon(\psi)\coloneqq \{\rho\text{: states}\mid D(\rho,\psi)\leq \epsilon\}$ and $B_{\mathrm{pure}}^\epsilon(\psi)\coloneqq \{\phi\text{: pure states}\mid D(\phi,\psi)\leq \epsilon\}$.

For a system with a generic Hamiltonian, the $\epsilon$-smooth max- and min-quantum Fisher information are defined by
\begin{align}
    \mathcal{F}^\epsilon_{\max} (\psi,H)\coloneqq \mathcal{F}^\epsilon_{\max} (\psi_{\mathrm{HO}}',H_{\mathrm{HO}}),\quad \mathcal{F}^\epsilon_{\min} (\psi,H)\coloneqq \mathcal{F}^\epsilon_{\min} (\psi_{\mathrm{HO}}',H_{\mathrm{HO}}),
\end{align}
where $\psi_{\mathrm{HO}}'$ is defined in Eq.~\eqref{eq:ref_state_ho}. 

Let $(\widehat{\psi},\widehat{H})=(\{\psi_m\}_m,\{H_m\}_m)$ be any sequences of pure states and Hamiltonians with period $2\pi$. The spectral sup- and inf- quantum Fisher information rates are defined as \cite{yamaguchi_beyond_2022}
\begin{align}
    \overline{\mathcal{F}}(\widehat{\psi},\widehat{H})&\coloneqq \lim_{\epsilon\to 0}\limsup_{m\to\infty}\frac{1}{m}\mathcal{F}_{\max}^\epsilon\left(\psi_m,H_m\right),\\
    \underline{\mathcal{F}}(\widehat{\psi},\widehat{H})&\coloneqq \lim_{\epsilon\to 0}\liminf_{m\to\infty}\frac{1}{m}\mathcal{F}_{\min}^\epsilon\left(\psi_m,H_m\right).
\end{align}
These quantities are shown to be equal to the coherence cost and the distillable coherence \cite{yamaguchi_beyond_2022}, i.e., 
\begin{align}
    C_{\mathrm{cost}}(\widehat{\psi},\widehat{H})=\overline{\mathcal{F}}(\widehat{\psi},\widehat{H}),\quad C_{\mathrm{dist}}(\widehat{\psi},\widehat{H})=\underline{\mathcal{F}}(\widehat{\psi},\widehat{H}).
\end{align}

\end{document}